\title[AAMAS-2025 Formatting Instructions]{$\WR$obin Hood Reachability Bidding Games\\ Full Version}
\author{Shaull Almagor}
\affiliation{
  \institution{Technion}
  \city{Haifa}
  \country{Israel}}
\email{shaull@technion.ac.il}
\author{Guy Avni}
\affiliation{
  \institution{University of Haifa}
  \city{Haifa}
  \country{Israel}}
\email{gavni@cs.haifa.ac.il}
\author{Neta Dafni}
\affiliation{
  \institution{Technion}
  \city{Haifa}
  \country{Israel}}
\email{netad@campus.technion.ac.il}
\begin{abstract}
Two-player graph games are a fundamental model for reasoning about the interaction of agents. These games are played between two players who move a token along a graph. In bidding games, the players have some monetary budget, and at each step they bid for the privilege of moving the token. 
Typically, the winner of the bid either pays the loser or the bank, or a combination thereof. 
We introduce Robin Hood bidding games, where at the beginning of every step the richer player pays the poorer a fixed fraction of the difference of their wealth. After the bid, the winner pays the loser. Intuitively, this captures the setting where a regulating entity prevents the accumulation of wealth to some degree. 

We show that the central property of bidding games, namely the existence of a threshold function, is retained in Robin Hood bidding games. We show that finding the threshold can be formulated as a Mixed-Integer Linear Program.
Surprisingly, we show that the games are not always determined exactly at the threshold, unlike their standard counterpart. 
\end{abstract}
\keywords{Bidding Games, Discounting, Reachability Games, Wealth Regulation}
\newcommand{\BibTeX}{\rm B\kern-.05em{\sc i\kern-.025em b}\kern-.08em\TeX}
\DeclareMathOperator*{\argmax}{\arg\!\max}
\DeclareMathOperator*{\argmin}{\arg\!\min}
\newcommand{\icol}[1]
{\left(\begin{smallmatrix}#1\end{smallmatrix}\right)}
\newcommand{\cG}{\mathcal{G}}
\newcommand{\cI}{\mathcal{I}}
\newcommand{\Th}{\mathsf{Th}}
\newcommand{\WR}{
\stackMath %
\stackon[-0.5ex]{\mathfrak{R}}{\hspace{1pt}\includegraphics[width=2ex]{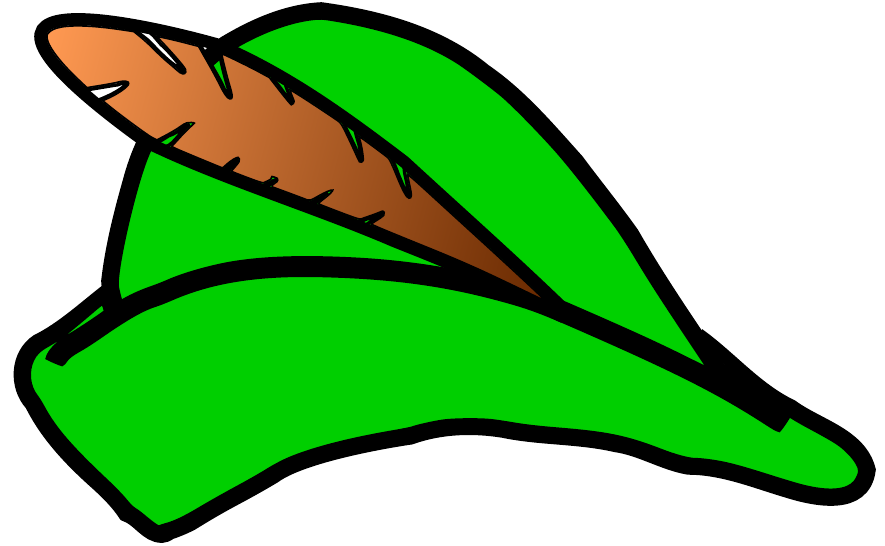}}
}
\newcommand{\WRinv}{
\stackMath %
\stackon[-0.8ex]{\mathfrak{R}^{-1}}{\hspace{-8.5pt}\includegraphics[width=2ex]{robinhood1.pdf}}
}
\renewcommand{\vec}[1]{\boldsymbol{#1}}
\newcommand{\bbR}{\mathbb{R}}
\newcommand{\bbN}{\mathbb{N}}
\newcommand{\tup}[1]{\langle #1 \rangle}
\newcommand{\play}{\texttt{play}}
\newcommand{\neig}{\Gamma}
\newtheorem{remark}{Remark}[section]
\newcommand{\eVecOne}{\vec{E^\mathtt{vec}_1}}
\newcommand{\eVecTwo}{\vec{E^\mathtt{vec}_2}}
\newcommand{\eValOne}{E^\mathtt{val}_1}
\newcommand{\eValTwo}{E^\mathtt{val}_2}
\newcommand{\xinit}{x_\mathrm{init}}
\newcommand{\pre}{\mathrm{pre}}
\newcommand{\unravel}{\mathcal{D}}
\newcommand{\PO}{\Ladiesroom\MVOne\xspace}
\newcommand{\PT}{\Gentsroom\MVTwo\xspace}
\begin{document}


\pagestyle{fancy}
\fancyhead{}


\maketitle 

\section{Introduction}
\label{sec:intro}
A \emph{reachability game} is a 2-player game played on a graph, by placing a token on one of the vertices and moving it along the edges according to some predefined rules, where the goal of Player 1 (denoted \PO) is to reach a set of target vertices, and the goal of Player 2 (denoted \PT) is to prevent that. Reachability games are fundamental in automated synthesis of systems~\cite{pnueli1989synthesis}, where a system plays against an environment (e.g., controller synthesis~\cite{girard2012controller}, robotic planning~\cite{fainekos2006translating}, network routing~\cite{brihaye2019dynamics}, etc.).

In \emph{bidding games}~\cite{lazarus1996richman,lazarus1999bidding, avni2019infiniteDuration,avni2020CONCUR}, each player has a \emph{budget} (a real value in $[0,1]$, where the sum of the budgets is assumed to be normalized to $1$) at any given moment, and the movement of the token in each step is determined by an \emph{auction}, resulting in the higher bidder moving the token. 
We focus on \emph{Richman} games~\cite{lazarus1996richman}, where the winner of the auction pays their bid to the loser (see e.g.,~\cite{avni2021bidding} for other bidding mechanisms).

Bidding games are useful for modelling settings where agents compete for some resource (e.g., money, computational resources, etc.) and use these resources to direct the interaction. 


A common phenomenon in bidding games is that if one of the players accumulates a high-enough portion of the budget, that player can 
force the game to reach any desired location. In loose terms, ``the rich can do whatever they want''.
In some settings this phenomenon is desirable, e.g., when modelling the interaction between an attacker and defender of a security system, and the budget is computational resources -- nothing prevents either party from hogging resources in order to win.
In many other settings, however, the players operate under some regulating entity (e.g., a scheduler in an operating system, or monetary regulation), which prevents the accumulation of excessive wealth in order to achieve some fairness, or to inspire active participation in the game.

A standard means to regulate wealth is to redistribute some of the wealth of the rich to the poor, \`a la Robin Hood's \emph{steal from the rich and give to the poor}~\cite{poddar2012exploring,franko2013inequality}. Note that this is not the same as taxation in that in standard taxation it is not at all clear that taxes go to the poor, nor is it the case that the poor are not taxed.

In this work, we introduce a variant of bidding games called \emph{Robin Hood bidding games} which incorporates wealth regulations. In a Robin Hood bidding game, each auction is preceded by a \emph{wealth redistribution} phase: the richer player pays the poorer player a constant fraction (denoted $\lambda$) of the difference between their budgets. The classical model of bidding games then corresponds to $\lambda=0$. We only consider $0\leq\lambda< \frac{1}{2}$, as $\lambda\ge \frac{1}{2}$ would mean that the richer player becomes the poorer (or equal, for $\lambda=\frac12$), which is of little motivation.
\begin{example}
    Consider the game depicted in \cref{fig: first example}, starting in $v_\mathrm{left}$, where the target for \PO is $v_1$. The wealth redistribution factor is $\lambda=\frac{1}{8}$. 

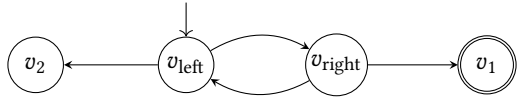
\begin{figure}[ht]
\centering  
\begin{tikzpicture}[auto,node distance=2.8cm,scale=1]
    \tikzset{every state/.style={minimum size=21pt, inner sep=1}};
    \node (vleft) [initial above, state, initial text = {}] at (0,0) {$v_\mathrm{left}$};
    \node (v2) [state] at (-2,0) {$v_2$};
    \node (vright) [state] at (2,0) {$v_\mathrm{right}$};
    \node (v1) [state, accepting] at (4,0) {$v_1$};
    \path [-stealth]
    (vleft) edge (v2)
    (vleft) edge [bend left] (vright)
    (vright) edge [bend left] (vleft)
    (vright) edge (v1);
\end{tikzpicture}
\caption{A Robin Hood game. The target for \PO is $v_1$.}
\label{fig: first example}
\end{figure}
\begin{figure}[ht]
\centering  
\includegraphics[width=0.75\columnwidth]{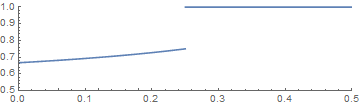}
\caption{Threshold of $v_{\textrm{left}}$ in \cref{fig: first example} as a function of $\lambda$.}
\label{fig: discontinuous threshold}
\end{figure}
    
    Observe that \PO wins if the play reaches $v_1$, and loses if it reaches $v_2$ or oscillates indefinitely between $v_\mathrm{left}$ and $v_\mathrm{right}$. 
    Recall that we assume the sum of budgets of the players is $1$. As we show in \cref{sec: example}, \PO needs a starting budget of at least $0.7$ in order to win. We demonstrate why \PO loses when starting with $0.6$. At a glance, the optimal strategies for the players induce the play in \cref{fig:optimal play intro example}. 
    
    Starting with budgets $\icol{0.6\\0.4}$, the first step is to apply wealth redistribution (WR, for short). Since $\lambda=\frac18$ and $0.6-0.4=0.2$, then \PO pays $0.025$, so the new budgets are $\icol{0.575\\0.425}$. Note that if \PO loses the bidding at $v_\mathrm{left}$, she loses the game. Therefore, she must bid at least $0.425$ (we use the common assumption that ties are broken in favor of \PO). Fortunately, she has sufficient budget for this bid. She moves the token to $v_\mathrm{right}$ and the new budgets are $\icol{0.15\\0.85}$ (see (1) in \cref{fig:optimal play intro example}).
    Next, WR is applied (with \PT paying). In $v_\mathrm{right}$ \PT must win the bidding, or he loses the game. To do so, he must bid strictly more than \PO. He bids $0.2375+0.001=0.2385$, wins the bid and moves the token to $v_\mathrm{left}$ (see (2)). Then, WR leaves \PT as the richer, and he wins the game (3) by out-bidding \PO and moving to $v_2$.
\end{example}

\begin{figure}[ht]
        \centering
        \small
        \[
         \underbrace{\icol{0.6\\0.4}}_{v_\mathrm{left}}
        \overset{\text{WR}}{\longrightarrow}
        \underbrace{\icol{0.575\\0.425}}_{v_\mathrm{left}}
        \overset{(1)}{\longrightarrow}
        \underbrace{\icol{0.15\\0.85}}_{v_\mathrm{right}}
        \overset{\text{WR}}{\longrightarrow}
        \underbrace{\icol{0.2375\\0.7625}}_{v_\mathrm{right}}
        \overset{(2)}{\longrightarrow} 
         \underbrace{\icol{0.476\\0.524}}_{v_\mathrm{left}}
        \overset{\text{WR}}{\longrightarrow}
        \underbrace{\icol{0.4805\\0.5195}}_{v_\mathrm{left}}
        \overset{(3)}{\longrightarrow}
        v_2
        \]
        \caption{A losing play for \PO.}
        \label{fig:optimal play intro example}
    \end{figure}

The central question in the study of bidding games is the existence of a \emph{threshold} function for the game: a function that assigns for each vertex $v$ a value $\Th(v)$, such that if \PO starts with budget more than $\Th(v)$ then she wins, and if she starts with less than $\Th(v)$, she loses. 
It is shown in~\cite{avni2021bidding} that every reachability bidding game has a threshold, and finding its value is in $\mathrm{NP}\cap\mathrm{coNP}$.


We extend the known results regarding the existence of thresholds for reachability bidding games, and show that every Robin Hood reachability bidding games has a threshold (\cref{sec: thresholds main}). We further show that computing this threshold can be done using Mixed Integer Linear Programming (MILP).
Additionally, unlike previous works, we discuss what happens when the initial budget equals exactly the threshold (\cref{sec: what happens at threshold}). We find, surprisingly, that the game might not be determined at the threshold (i.e., neither player has a winning strategy), a behavior that does not occur in standard bidding games where \PO wins ties
(nor in general turn based games without concurrent biddings~\cite{martin1975borel}). 
Apart from the result itself, we believe it is important to draw attention to discussions about the behavior at the threshold. Indeed -- it is often the case that optimal strategies work by reaching certain vertices exactly at their threshold. Still in \cref{sec: what happens at threshold}, we show that given the threshold function and a vertex $v$, we can decide in polynomial time if the game is undetermined at $v$ with budget $\Th(v)$, and if not -- who the winner is.

In addition to these contributions, we observe curious behavior of the threshold function when $\lambda$ is treated as a parameter. Specifically, in \cref{sec: example} we conduct an elaborate analysis of the example in \cref{fig: first example} and show that this function might be discontinuous (see~\cref{fig: discontinuous threshold}). We also demonstrate a toolbox for analyzing specific games when $\lambda$ is a parameter.



\section{Preliminaries}
\label{sec: definitions}
A \emph{graph} is $G=(V,E)$ where $V$ is a set of vertices and $E\subseteq V\times V$ is a set of edges. For $v\in V$ we denote by $\neig(v)=\{u\mid (v,u)\in E\}$ the set of \emph{neighbors} of $v$. If $\neig(v)=\emptyset$ then $v$ is a \emph{sink}.

A \emph{Robin Hood reachability bidding game} is $\cG=\tup{G,v_0,\xinit,\lambda,T}$, where $G=(V,E)$ is a finite graph, $v_0\in V$ is an initial vertex, $\xinit$ is \PO's initial budget, $\lambda\in[0,\frac{1}{2})$ is the \emph{wealth redistribution} factor, and $T\subseteq V$ is a set of \emph{target} vertices for \PO. 
We assume for convenience that the vertices in $T$ are sinks.
We sometimes omit $v_0$ and $\xinit$, when the discussion is not limited to specific initial vertex and budget.

The game is played between 2 players as follows. At each step, a token is placed on a vertex $v\in V$ (initially $v_0$), and each of the players has a \emph{budget}, the budgets being described by a vector $\vec{w}\in [0,1]^2$ (initially $\icol{\xinit\\1-\xinit}$). 
For clarity, we denote vectors in bold (e.g., $\vec{v}$).

The game proceeds in steps, each consisting of the following phases:
\begin{enumerate}
    \item Wealth Redistribution (abbreviated WR and denoted $\WR$): Each player's budget is updated using the operator
    \[
        \WR(x)=(1-2\lambda)x+\lambda
    \]
    \item Bidding: Each player (concurrently) makes a bid within their budget. The player with the higher bid $b$ wins the bidding (a tie is broken in favor of \PO) and pays $b$ to the other player. The budgets are updated accordingly.
    \item Moving: The player who wins the bidding moves the token to a neighbor of $v$ of their choice.
\end{enumerate}
We remark that $\WR$ can be viewed as the linear operator on vectors given by the matrix
\[ \WR=
    \begin{pmatrix}
        1-\lambda & \lambda \\
        \lambda & 1-\lambda
    \end{pmatrix}
\]
Indeed, in this view we have $\WR \icol{x \\ 1-x}=\icol{(1-2\lambda)x+\lambda \\ (1-2\lambda)(1-x)+\lambda}$. We abuse notation and use either view as convenient. 

A \emph{configuration} is a pair $(v,x)\in\left(V\times [0,1]\right)$ 
where $v$ is the current vertex and $x$ is \PO's budget (so the budget for \PT is $1-x$). 
A \emph{strategy}\footnote{Note that our definition is restricted to \emph{memoryless} strategies. Since we consider Reachability objectives, this is sufficient.} for \PO is a function $\sigma_1\colon V\times [0,1]\to[0,1]\times V$ describing for each configuration $(v,x)$ a bid $b\in [0,1]$ and a neighbor $u$ of $v$. That is, if $\sigma_1(v,x)=(b,u)$ then we require $b\le \WR(x)$ (as \PO's budget during the bidding is $\WR(x)$) and $(v,u)\in E$. 
A strategy $\sigma_2$ for \PT is defined similarly, changing the budget requirement to $b\le \WR(1-x)$, as $\WR(1-x)$ is \PT's budget. 
Given an initial configuration $(v_0,\xinit)$ and strategies $\sigma_1,\sigma_2$ for \PO and \PT respectively, their {induced play}, denoted $\play(\sigma_1,\sigma_2,v_0,\xinit)=(v_0,x_0),(v_1,x_1),\ldots$, is a (finite or infinite) sequence of configurations defined as per the steps above. Specifically, 
$x_0=x_{\mathrm{init}}$, and for every $n\ge 0$, if $v_n$ has no outgoing edges, the sequence terminates. Otherwise, consider $(b_i,u_i)=\sigma_i(v_0,\WR(x_n))$ for $i\in \{1,2\}$. Then, if $b_1\ge b_2$ we have $(v_{n+1},x_{n+1})=(u_1,\WR(x_n)-b_1)$ and if $b_1<b_2$ then $(v_{n+1},x_{n+1})=(u_2,\WR(x_n)+b_2)$.
If the play reaches $T$ then \PO \emph{wins} the play, and otherwise \PT wins. For $i\in \{1,2\}$, a strategy $\sigma_i$ is \emph{winning for Player $i$ from $(v_n,\xinit)$} if for every strategy $\sigma_{1-i}$ for Player $1-i$, the induced play is winning for Player $i$.

Observe that the initial budgets of the player sum to $1$. Moreover, their sum is maintained by the WR operation and after each bidding. Thus, their sum remains $1$ throughout the play. 

Given $\tup{G,\lambda,T}$,
 a \emph{threshold function} is a function $\Th\colon V\to[0,1]$ such that for every $(v,\xinit)$ the following holds for the game $\tup{G,v,\xinit,\lambda,\alpha}$:
\begin{itemize}[leftmargin=*]
    \item If $\xinit>\Th(v)$, \PO has a winning strategy from $(v,\xinit)$.
    \item If $\xinit<\Th(v)$, \PT has a winning strategy from $(v,\xinit)$.
\end{itemize}
We call $\Th(v)$ \emph{1-strong} if an initial budget of exactly $\Th(v)$ wins for \PO, and \emph{2-strong} if it wins for \PT. If neither is true (the game is undetermined at the threshold), we call the threshold \emph{weak}.

\section{An Enlightening Example}
\label{sec: example}

In this section we expand the discussion regarding the game in \cref{fig: first example}. This serves to gain familiarity with the model, but in addition -- enables us to prove certain interesting properties of Robin-Hood games, namely \cref{cor:threshold not continuous}. Moreover, the tools we present may be of use in analyzing other games (c.f., \cref{rmk:toolbox}).

We are interested in finding a threshold for $v_\mathrm{left}$ specifically, as a function of $\lambda$. We denote this threshold by $\tau(\lambda)$.

For $\lambda=0$, it is shown in~\cite{avni2021bidding} that $\frac{2}{3}$ is a 1-strong threshold. Intuitively, this budget allows \PO to ensure the play oscillates between $v_\mathrm{left}$ and $v_\mathrm{right}$, with her budget increasing with each move to $v_\mathrm{left}$, until it is high enough to allow her to win two consecutive biddings and reach $v_1$.
In the following, we show the existence of a threshold $\tau(\lambda)$ for every $\lambda$.

A play can end by either reaching $v_1$ or $v_2$ (and \PO wins or loses respectively), or oscillate infinitely between $v_\mathrm{left},v_\mathrm{right}$, in which case \PO loses. Regardless, a play can be described by a finite or infinite sequence of iterations, each comprising four phases: (1) WR in $v_\mathrm{left}$; (2) bidding in $v_\mathrm{left}$; (3) WR in $v_\mathrm{right}$; and (4) bidding in $v_\mathrm{right}$. This sequence can be finite and be followed by a move into $v_1$ or $v_2$ which ends the play, or be infinite. 
For the $n$'th iteration, we denote the budget vectors as follows. In $v_\mathrm{left}$: $\vec{w}^{(n)}_{\mathrm{left}}$ and $\vec{w}^{(n)}_{\mathrm{wr},\mathrm{left}}$, before and after WR, respectively; and similarly $\vec{w}^{(n)}_{\mathrm{right}}$ and $\vec{w}^{(n)}_{\mathrm{wr},\mathrm{right}}$ for $v_\mathrm{right}$.
We also denote the first entries of these vectors, that is, \PO's budgets, by $x^{(n)}_{\mathrm{left}},x^{(n)}_{\mathrm{wr},\mathrm{left}},x^{(n)}_{\mathrm{right}},x^{(n)}_{\mathrm{wr},\mathrm{right}}$.

\subsection{Alternative Tie Breaking}
\label{sec:example alternative tie breaking}
It is useful to first consider the case that ties are not always broken in favor of \PO, but instead in favor of \PO when in $v_\mathrm{left}$ and in favor of \PT when in $v_\mathrm{right}$ (resulting in moving from $v_\mathrm{left}$ to $v_\mathrm{right}$ and vice-versa).

We describe optimal strategies $\sigma_1,\sigma_2$ for the players and their resulting play.
At any iteration $n$, when in $v_\mathrm{left}$ and about to bid, if $x^{(n)}_{\mathrm{wr},\mathrm{left}}<\frac{1}{2}$ then \PT can bid $\frac12+\epsilon$ and move to $v_2$, so \PO instantly loses. For this reason we only consider initial budgets of at least $\frac{1}{2}$ for this example. Note that having a budget of at least $\frac{1}{2}$ before or after WR is equivalent, and the same for a strict inequality. That is, $x>\frac12$ if and only if $\WR(x)>\frac12$ and $\WR(\frac12)=\frac12$.
If \PO has at least $\frac{1}{2}$, she can win the bidding by matching \PT's budget, namely bidding $1-x^{(n)}_{\mathrm{wr},\mathrm{left}}$. Moreover, she must do so or lose the game. 
\PO then pays that amount to \PT, and moves to $v_\mathrm{right}$. Similarly, when in $v_\mathrm{right}$ and about to bid, \PO wins instantly if $x^{(n)}_{\mathrm{wr},\mathrm{right}}>\frac{1}{2}$, and otherwise \PT pays $x^{(n)}_{\mathrm{wr},\mathrm{right}}$ to \PO and moves to $v_\mathrm{left}$.

Starting with $\vec{w}^{(0)}_{\mathrm{left}}=\icol{\xinit\\1-\xinit}$, the sequence of vectors (while oscillating between $v_{\mathrm{left}}$ and $v_{\mathrm{right}}$) can therefore be described with the following steps (viewing $\WR$ as a matrix):
\begin{align*}
    &\text{(Step 1) } \vec{w}^{(n)}_{\mathrm{wr},\mathrm{left}}=\WR \vec{w}^{(n)}_{\mathrm{left}} \qquad \text{ (Step 2) } \vec{w}^{(n)}_{\mathrm{right}}=B_\mathrm{left}\vec{w}^{(n)}_{\mathrm{wr},\mathrm{left}} \\
    & \text{(Step 3) } \vec{w}^{(n)}_{\mathrm{wr},\mathrm{right}}=\WR \vec{w}^{(n)}_{\mathrm{right}} \qquad \text{(Step 4) } \vec{w}^{(n+1)}_{\mathrm{left}}=B_\mathrm{right}\vec{w}^{(n)}_{\mathrm{wr},\mathrm{right}}
\end{align*}
Where
\[
    B_\mathrm{left}=
    \begin{pmatrix}
        1  & -1 \\
        0 & 2
    \end{pmatrix}
    \qquad
    B_\mathrm{right}=
    \begin{pmatrix}
        2  & 0 \\
        -1 & 1
    \end{pmatrix}
    \qquad 
    \WR=
    \begin{pmatrix}
        1-\lambda & \lambda \\
        \lambda & 1-\lambda
    \end{pmatrix}
\]
Overall, $\vec{w}^{(n+1)}_{\mathrm{left}}=M\vec{w}^{(n)}_{\mathrm{left}}$ for the matrix $M=B_\mathrm{right}\WR B_\mathrm{left}\WR$, which depends on $\lambda$. 
The matrix $M$ has two eigenvalues: 
\begin{itemize}
    \item $\eValOne=1$, with the (normalized) eigenvector $\eVecOne=\icol{\frac{2\lambda-2}{4\lambda-3}\\ \frac{2\lambda-1}{4\lambda-3}}$.
    \item $\eValTwo=16\lambda^2-16\lambda+4$, with the eigenvector $\eVecTwo=\icol{-1\\1}$.
\end{itemize}

Recall that the budget vectors belong to the affine subspace $W=\{(x,y)\in \bbR^2 \mid x+y=1\}$, which is invariant under $M$. Every vector $\vec{w}\in W$ can be written as a linear combination of the form $\vec{w} = \eVecOne + c\eVecTwo$ for some $c\in\bbR$. Indeed, since $\eVecOne\in W$ and $\eVecTwo$ is the slope of $x+y=1$, we have that $W=\{\eVecOne+c\eVecTwo\mid c\in \bbR\}$.

We then have $M\vec{w} = \eVecOne + c\eValTwo\eVecTwo = \eVecOne + \eValTwo(\vec{w}-\eVecOne)$. Projecting this on the first coordinate and denoting $x_{\mathrm{fix}}=\frac{2\lambda-2}{4\lambda-3}$, for every $n$ we have $x^{(n+1)}_{\mathrm{left}} = \mathtt{next}\left(x^{(n)}_{\mathrm{left}}\right)$, where $\mathtt{next}(x) = x_{\mathrm{fix}} + \eValTwo(x-x_{\mathrm{fix}})$, and overall
\[
    x^{(n)}_{\mathrm{left}} = x_{\mathrm{fix}} + \left(\eValTwo\right)^n(\xinit-x_{\mathrm{fix}})
\]

If the condition $x^{(n)}_{\mathrm{wr},\mathrm{right}}>\frac{1}{2}$ is met for some $n$ (and the play does not end before that), \PO wins. Similarly, if $x^{(n)}_{\mathrm{wr},\mathrm{left}}<\frac{1}{2}$ (equivalently, $x^{(n)}_{\mathrm{left}}<\frac{1}{2}$) then \PO loses. If neither of these occurs for any $n$ then the play is infinite and \PO loses. 

It is convenient to phrase the win condition $x^{(n)}_{\mathrm{wr},\mathrm{right}}>\frac{1}{2}$ as a condition on $x^{(n)}_{\mathrm{left}}$, which allows the analysis to focus on $x^{(n)}_{\mathrm{left}}$ only. Note that $x^{(n)}_{\mathrm{wr},\mathrm{right}}$ is an injective function of $x^{(n)}_{\mathrm{left}}$, obtained by projecting the operator $\WR B_\mathrm{left} \WR$ on the first coordinate. Its reverse function, denoted $f^{\mathrm{rev}}(x)$, satisfies $x^{(n)}_{\mathrm{left}}=f^{\mathrm{rev}}\left(x^{(n)}_{\mathrm{wr},\mathrm{right}}\right)$. It is easy to verify that $f^{\mathrm{rev}}(x)=\frac{4\lambda^2-5\lambda+x+1}{2(2\lambda-1)^2}$, and it is increasing for all $x$. Therefore, \PO wins in the $n$'th iteration if and only if $x^{(n)}_{\mathrm{left}}>f^{\mathrm{rev}}(\frac{1}{2})$. It follows that \PO wins the play if and only if there exists $n$ such that $x^{(n)}_{\mathrm{left}}>f^{\mathrm{rev}}(\frac{1}{2})$ and $x^{\left(n'\right)}_{\mathrm{left}}\geq\frac{1}{2}$ for all $n'<n$.

We now split the analysis according to the value of $\lambda$. Specifically, according to whether $\eValTwo$ is the dominant eigenvalue, i.e., whether 
$\lambda<\frac{1}{4}$ or $\lambda\geq\frac{1}{4}$.

\paragraph{The case $\frac{1}{4}\leq\lambda<\frac{1}{2}$}

In this case, $f^{\mathrm{rev}}(\frac{1}{2})\geq1$ and therefore \PO's winning condition is never met, and she loses for every initial budget.

\paragraph{The case $0<\lambda<\frac{1}{4}$}
In this case, we have $\eValTwo>1$. Recall that $\eVecOne=\icol{x_{\mathrm{fix}} \\ 1-x_{\mathrm{fix}}}$ is an eigenvector with eigenvalue $1$, and we claim that it is the threshold vector, that is, $x_{\mathrm{fix}}=\frac{2\lambda-2}{4\lambda-3}$ is a (2-strong) threshold. Note that $x_{\mathrm{fix}}$ is increasing and continuous in $\lambda$, and equals $\frac{2}{3}$ for $\lambda=0$ and $\frac{3}{4}$ for $\lambda=\frac{1}{4}$.

Recall that $x^{(n)}_{\mathrm{left}} = x_{\mathrm{fix}} + \left(\eValTwo\right)^n(\xinit-x_{\mathrm{fix}})$, and \PO wins if and only if this value goes above $f^{\mathrm{rev}}(\frac{1}{2})$ (and does not go below $\frac{1}{2}$ before that). We now have $\frac{1}{2}<\frac{2}{3}\leq x_{\mathrm{fix}}<f^{\mathrm{rev}}\left(\frac{1}{2}\right)<1$. It follows that:
\begin{itemize}
    \item For $\xinit=x_{\mathrm{fix}}$, we have $x^{(n)}_{\mathrm{left}}\equiv x_{\mathrm{fix}}<f^{\mathrm{rev}}(\frac{1}{2})$ for all $n$, and so \PO loses.
    \item For $\xinit>x_{\mathrm{fix}}$, the sequence $x^{(n)}_{\mathrm{left}}$ increases unboundedly, eventually above $f^{\mathrm{rev}}(\frac{1}{2})$, at which point \PO wins.
    \item For $\xinit<x_{\mathrm{fix}}$, the sequence decreases unboundedly, eventually below $\frac{1}{2}$, at which point \PO loses.
\end{itemize}

\begin{remark}
    In addition to the thresholds, an analysis of the eigenvalues of $M$ gives us an insight regarding the behavior of the budget vectors throughout the play.
    
    In the case $\lambda=\frac{1}{4}$, $M$ is the identity matrix, meaning $x^{(n+1)}_{\mathrm{left}}=x^{(n)}_{\mathrm{left}}$ for all $n$, and the resulting sequence of configurations is periodic. We use the fact that a win condition for \PT is not met during the first iteration, since we only consider $\xinit\geq\frac{1}{2}$.

    In the case $\frac{1}{4}<\lambda<\frac{1}{2}$, $v_2$ cannot be reached symmetrically to $v_1$, and so the play is again infinite. In this case, $\eValTwo<1$. Since the dominant eigenvalue is $\eValOne=1$, the sequence of the budget vectors $v^{(n)}_{\mathrm{left}}$ converges to $\eVecOne$, which depends only on $\lambda$ and not on the initial budget.

     In the case $0<\lambda<\frac{1}{4}$, as mentioned above, the dominant eigenvalue is $\eValTwo>1$, and so the budgets would diverge if the play had stayed in $v_\mathrm{left},v_\mathrm{right}$.
\end{remark}

\subsection{Correct Tie Breaking}
\label{sec:example correct tie breaking}
We now return to the original tie breaking mechanism, where all ties are broken in favor of \PO. When bidding in $v_\mathrm{right}$, \PT must win the bidding in order to not lose instantly. The analysis must therefore be modified in the following ways:
\begin{itemize}
    \item \PO can win from $v_\mathrm{right}$ with a budget of exactly $\frac{1}{2}$. The winning condition is therefore changed to $x^{(n)}_{\mathrm{left}}\geq f^{\mathrm{rev}}(\frac{1}{2})$ (instead of a strict inequality). Accordingly, we differentiate between the cases $\lambda=\frac{1}{4}$ (where $f^{\mathrm{rev}}(\frac{1}{2})=1$) and $\lambda>\frac{1}{4}$ (where $f^{\mathrm{rev}}(\frac{1}{2})>1$).
    \item When in $v_\mathrm{right}$ and $x^{(n)}_{\mathrm{wr},\mathrm{right}}<\frac{1}{2}$, \PT must bid strictly more than $x^{(n)}_{\mathrm{wr},\mathrm{right}}$, i.e. $x^{(n)}_{\mathrm{wr},\mathrm{right}}+\epsilon_n$ for some $\epsilon_n>0$. Note that thus far, our analysis considered fixed optimal strategies, and hence a single play. Now, however, each strategy of \PT may choose different values for the $\epsilon_n$'s, thus inducing multiple plays that need to be analyzed.
\end{itemize}

\paragraph{The case $\lambda=\frac{1}{4}$}

In this case  $f^\mathrm{rev}(\frac{1}{2})=1$. With the updated winning condition, \PO wins if and only if $x^{(n)}_{\mathrm{left}}=1$ for some $n$.

For $\xinit<1$, \PO still loses. The reason is \PT can choose values $\epsilon_n$ so that the sequence $x^{(n)}_{\mathrm{left}}$ is increasing but strictly below $1$. 
Indeed, let $\{x_n\}_{n=0}^\infty$ be an increasing sequence, strictly bounded above by $1$, such that $x_0=\xinit$. We show that \PT can ensure $x^{(n)}_{\mathrm{left}}\leq x_n$ for all $n$. Fix $n$ and assume $x^{(n)}_{\mathrm{left}}\leq x_n$. The next value $x^{(n+1)}_{\mathrm{left}}$ is an increasing, continuous function of $\epsilon_n$. Since $M$ is the identity matrix, this function equals $x^{(n)}_{\mathrm{left}}$ for $\epsilon_n=0$, and so a small enough $\epsilon_n$ achieves  $x^{(n+1)}_{\mathrm{left}}\leq x_{n+1}$.

If $\xinit=1$, however, \PO wins in the first iteration. Indeed, by bidding $\frac{1}{4}$, moving to $v_\mathrm{right}$, then bidding $\frac{1}{2}$ and moving to $v_1$, the resulting play is:
\begin{align*}
    \underbrace{\icol{1\\0}}_{v_\mathrm{left}}
    \overset{\WR}{\longrightarrow}
    \underbrace{\icol{0.75\\0.25}}_{v_\mathrm{left}}
    \overset{\text{ bid }\frac14}{\longrightarrow}
    \underbrace{\icol{0.5\\0.5}}_{v_\mathrm{right}}
    \overset{\WR}{\longrightarrow}
    \underbrace{\icol{0.5\\0.5}}_{v_\mathrm{right}}
    \overset{\text{ bid }\frac12}{\longrightarrow}
    \underbrace{\icol{0\\1}}_{v_\mathrm{1}}
\end{align*}

\paragraph{The case $\frac{1}{4}<\lambda<\frac{1}{2}$}

Here $f^{\mathrm{rev}}(\frac{1}{2})>1$, and so \PO loses for any initial budget (including exactly $1$).

\paragraph{The case $0\leq\lambda<\frac{1}{4}$}
In this case $\eValTwo>1$, $x_{\mathrm{fix}}<f^{\mathrm{rev}}\left(\frac{1}{2}\right)<1$ and recall that in~\cref{sec:example alternative tie breaking} the threshold was $x_{\mathrm{fix}}$. We claim this remains the threshold, only it is now 1-strong instead of 2-strong. 

First consider $\xinit\geq x_{\mathrm{fix}}$. Fix a strategy of \PT. Since $\epsilon_0=0$ results in \PT losing in the first iteration, we assume for the rest of the analysis that $\epsilon_0>0$, and so $x^{(1)}_{\mathrm{left}}>\xinit\geq x_{\mathrm{fix}}$. Recall that in~\cref{sec:example alternative tie breaking}, such a value for the initial budget resulted in a winning play for \PO. Now, \PT can do no better; indeed, for all $n$, since $\epsilon_n>0$, we have by induction that $x_{\mathrm{fix}} + \left(\eValTwo\right)^{n-1}\left(x^{(1)}_{\mathrm{left}}-x_{\mathrm{fix}}\right)\leq x^{(n)}_{\mathrm{left}}$ (c.f., a similar equation but with equality in~\cref{sec:example alternative tie breaking}). We conclude that $x^{(n)}_{\mathrm{left}}$ increases above $f^{\mathrm{rev}} (\frac{1}{2})$, and \PO wins.

Finally, consider $\xinit<x_{\mathrm{fix}}$. This previously resulted in a sequence $x^{(n)}_{\mathrm{left}}$ with the (negative) differences $x^{(n)}_{\mathrm{left}}-x_{\mathrm{fix}}$ being multiplied by the constant $\eValTwo>1$ with each iteration, and the sequence $\epsilon_n$ can now be chosen to retain a similar behaviour of the differences. Indeed, Fixing $n$ and $x^{(n)}_{\mathrm{left}}$, we have that $x^{(n+1)}_{\mathrm{left}}$ is an increasing, continuous function of $\epsilon_n$, that equals $x_{\mathrm{fix}}+\eValTwo\left(x^{(n)}_{\mathrm{left}}-x_{\mathrm{fix}}\right)$ for $\epsilon_n=0$. A small enough $\epsilon_n$ can therefore achieve $x^{(n+1)}_{\mathrm{left}} < x_{\mathrm{fix}}+\frac{1}{2}\eValTwo\left(x^{(n)}_{\mathrm{left}}-x_{\mathrm{fix}}\right)$. In conclusion, \PT can still force the sequence $x^{(n)}_{\mathrm{left}}$ to decrease below $\frac{1}{2}$, and so \PO loses.

Concluding the three cases, we have that the thresholds are the following:
\begin{itemize}
    \item For $0\leq\lambda<\frac{1}{4}$: $\tau(\lambda)=\frac{2\lambda-2}{4\lambda-3}$ (increasing from $\frac{2}{3}$ to $\frac{3}{4}$, 1-strong threshold)
    \item For $\lambda=\frac{1}{4}$: $\tau(\lambda)=1$ (1-strong threshold)
    \item For $\frac{1}{4}<\lambda<\frac{1}{2}$: $\tau(\lambda)=1$ (2-strong threshold)
\end{itemize}
Note that $\tau(\lambda)$ is discontinuous at $\lambda=\frac{1}{4}$, which gives us the following.
\begin{corollary}
    \label{cor:threshold not continuous}
    There exists a game $\cG$ and vertex $v$ such that the threshold function of $v$ is discontinuous as a function of $\lambda$.
\end{corollary}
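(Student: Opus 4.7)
The plan is to use the game $\cG$ depicted in \cref{fig: first example} together with the vertex $v_\mathrm{left}$, and to show directly that the threshold function $\tau(\lambda)$ for $v_\mathrm{left}$ already computed in \cref{sec:example correct tie breaking} is discontinuous at $\lambda = \frac{1}{4}$.

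First, I would invoke the case analysis from \cref{sec:example correct tie breaking}: for $0 \le \lambda < \frac{1}{4}$, we have $\tau(\lambda) = \frac{2\lambda-2}{4\lambda-3}$, which is a rational function of $\lambda$ whose denominator does not vanish on $[0,\frac{1}{4}]$, so it is continuous on this interval. Next, I would compute the left limit at $\frac{1}{4}$, namely
\[
    \lim_{\lambda \to \frac{1}{4}^-} \tau(\lambda) \;=\; \frac{2 \cdot \frac{1}{4} - 2}{4 \cdot \frac{1}{4} - 3} \;=\; \frac{-\frac{3}{2}}{-2} \;=\; \frac{3}{4}.
\]
On the other hand, the case analysis gives $\tau(\tfrac{1}{4}) = 1$, so the left limit differs from the value at $\frac{1}{4}$, witnessing discontinuity.

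I do not anticipate any genuine obstacle, since the analysis of $\tau(\lambda)$ on each of the intervals $[0,\frac{1}{4})$, $\{\frac{1}{4}\}$, and $(\frac{1}{4},\frac{1}{2})$ was already carried out in detail preceding this corollary. The only thing that really needs to be emphasized in the proof is which regime contributes which value: the closed form $\frac{2\lambda-2}{4\lambda-3}$ governs the strictly sub-$\frac{1}{4}$ region, while the boundary $\lambda = \frac{1}{4}$ is captured by the degenerate case in which $M$ is the identity and the winning condition $x^{(n)}_{\mathrm{left}} \ge f^{\mathrm{rev}}(\tfrac{1}{2}) = 1$ forces $\tau(\tfrac{1}{4}) = 1$. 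The discontinuity of $\tau$ at $\lambda = \frac{1}{4}$ is exactly this jump from $\frac{3}{4}$ to $1$, as illustrated in \cref{fig: discontinuous threshold}.
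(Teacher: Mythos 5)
Your proposal is correct and follows exactly the paper's route: the paper derives the corollary from the same case analysis in \cref{sec:example correct tie breaking}, noting that $\tau(\lambda)=\frac{2\lambda-2}{4\lambda-3}$ tends to $\frac{3}{4}$ as $\lambda\to\frac{1}{4}^-$ while $\tau(\frac{1}{4})=1$. Your explicit computation of the left limit is just a slightly more spelled-out version of the paper's observation.
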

\begin{remark}
The behavior of $\tau$ in this example can be given an economic interpretation: after a certain threshold (namely $\lambda=\frac14$), the threshold (suddenly) becomes equivalent to that of any $\frac14\le \lambda<\frac12$. This suggests that beyond a certain threshold, it no longer helps anyone to impose more tax. Naturally this does not fully extend to real-life economics, but it is a curiosity nonetheless.
\end{remark}
\begin{remark}
    \label{rmk:toolbox}
    The analysis carried out in this section crucially depends on obtaining a characterization of the play resulting from optimal strategies as a linear dynamical system. Since analyzing such systems is notoriously difficult, especially in high dimensions~\cite{ouaknine2012decision}, automating this analysis algorithmically seems out of reach. 
    Nonetheless, the tools we develop in this section may be used in other examples. Specifically, starting with alternative tie-breaking to avoid sinks, and using the dominant eigenvalues as a guide to the long-run behavior.
\end{remark}

\section{Existence of Thresholds}
\label{sec: thresholds main}
The analysis in~\cref{sec: example} demonstrates the threshold function for a specific game, but does not give a general technique for computing thresholds, nor shows that they always exist.
In this section we present our main result, namely that every game has a threshold function.

The section is organized as follows. In~\cref{sec:average property} we describe an invariant dubbed \emph{the average property} 
which gives a lower bound on the threshold.
In~\cref{sec:threshold on DAGs} we restrict the discussion to games played on directed acyclic graphs (DAGs), in which case there exists a unique function satisfying the average property, and it constitutes a threshold. 
In~\cref{sec:threshold on General Graphs} we turn to general graphs, and show the existence of a threshold by a reduction to the setting of DAGs. In~\cref{sec:threshold in terms of average property} we show that the threshold satisfies the average property, and is obtained as the point-wise maximum over all functions satisfying this property. Finally, in~\cref{sec:threshold complexity} we use this characterization to compute the threshold using mixed-integer linear programming (MILP).

\subsection{The Average Property}
\label{sec:average property}
Consider a game $\cG=\tup{G,\lambda,T}$, and assume it has a threshold $\Th\colon{V\to[0,1]}$. For a sink $v\in T$, it holds that $\Th(v)=0$, since starting at the target, \PO instantly wins for any initial budget. For a sink $v\notin T$, we have $\Th(v)=1$, since \PO instantly loses for any initial budget.

For a non-sink $v\in V$, $\Th(v)$ relates to the minimum and maximum values of $\Th$ among $v$'s neighbors as follows. Intuitively, if \PO wins the bidding at $v$, it is optimal for her to choose the next vertex to have a minimal threshold. Similarly, \PT would choose the maximal threshold. As we show in this section, the budget needed \emph{during the bidding phase} in $v$ for \PO to win the game turns out to be exactly the average of these two values, and so the threshold (which is the budget before WR) is obtained by applying the reverse map $\WRinv$ on this average (where $\WRinv(x)=\frac{x-\lambda}{1-2\lambda}$). 
We remark that similar average properties typically arise in bidding games~\cite{avni2020CONCUR}.
\begin{definition}
\label{def: average property}
    Let $G=(V,E)$ be a graph, $T\subseteq V$ a subset of the sinks, and $f\colon V\to[0,1]$. For a non-sink $v\in V$, let
    \[
        v^+ = \argmax_{u\in \neig(v)} f(u) \quad\text{ and }
        \quad v^- = \argmin_{u\in \neig(v)} f(u)
    \]
    (we choose arbitrarily if the extrema are not unique). 
    Let 
    \[
        f_{\mathrm{avg}}(v)=\frac{f(v^+)+f(v^-)}{2}    \quad\text{ and }
        \quad\ f_\pre(v)=\WRinv(f_\mathrm{avg}(v))\
    \]
    We say that $f$ satisfies the \emph{average property} if for every $v\in V$:
    \begin{itemize}
        \item If $v$ is a sink, then
        $
            f(v)= \begin{cases}
                        0 & v\in T \\
                        1 & v\notin T
                    \end{cases}
        $.
        \item If $v$ is not a sink, then 
        $
            f(v)= \begin{cases}
                        0 & f_\pre(v) < 0 \\
                        f_\pre(v) & f_\pre(v)\in[0,1] \\
                        1 & f_\pre(v) > 1 \\
                    \end{cases}
        $
        Equivalently, $f(v)=\max(\min(f_\pre(v),1),0)$
    \end{itemize}
\end{definition}
Note that the ``artificial'' introduction of $1$ and $0$ as limits makes sense both because $f$'s range is $[0,1]$, but also semantically. For example, if $f_\pre(v)<0$, it can be viewed as ``\PO does not even need $0$ in order to win, so $0$ is certainly enough''. Formally, since $\WR(x)$ is an increasing function, we have in this case that $\WR(0)>\WR(f_\pre(v))=f_\mathrm{avg}(v)$. Therefore, starting with budget $0$, during the first bidding \PO has more than $f_\mathrm{avg}(v)$. In the following we show  this is enough for \PO to win. The analysis for $f_\pre(v)>1$ is similar.

The following lemma provides a clear view of the motivation for the average property and its relation to thresholds. Intuitively, it states that if $f$ satisfies the average property, then starting from $\xinit>f(v_0)$, \PO can guarantee that the current budget always remains above $f(v)$, and dually for \PT staying below $f(v_0)$.
At first glance, this may seem to suggest that every function satisfying the average property is a threshold. This, however, is generally false: there may be multiple such functions, while the threshold is clearly unique.
\begin{lemma}
\label{lem: strategies for average property}
    Let $\cG=\tup{G,v_0,\lambda,T}$ be a game, and let $f$ be a function satisfying the average property. 
    There exist strategies $\sigma_1,\sigma_2$ for Players 1 and 2, respectively, such that the following holds.
    \begin{enumerate}
        \item If $\xinit>f(v_0)$ then for every strategy $\sigma'_2$ of \PT, every configuration $(v,x)$ in $\play(\sigma_1,\sigma'_2,v_0,\xinit)$ satisfies $x>f(v)$.
        \item 
        If $\xinit<f(v_0)$ then for every strategy $\sigma'_1$ of \PO, every configuration $(v,x)$ in $\play(\sigma'_1,\sigma_2,v_0,\xinit)$ satisfies $x<f(v)$.
    \end{enumerate}
\end{lemma}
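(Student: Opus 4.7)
The plan is to define explicit memoryless strategies $\sigma_1,\sigma_2$ that, at every non-sink $v$, bid the same ``half-gap'' $b_v=\tfrac{f(v^+)-f(v^-)}{2}$: on winning the bid, $\sigma_1$ moves to $v^-$ while $\sigma_2$ moves to $v^+$. I would then prove each part by induction on the length of the play, showing the stated budget invariant is preserved in every single step, regardless of the opponent's bid and move.

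For part~1, consider a reached configuration $(v,x)$ with $x>f(v)$. Monotonicity of $\WR$ together with a small case split on where $f_\pre(v)$ falls with respect to $[0,1]$ yields the key inequality $\WR(x)>f_\mathrm{avg}(v)$ (the case $f_\pre(v)>1$ is vacuous, since it forces $f(v)=1$ contradicting $x>f(v)$; for $f_\pre(v)<0$ one uses $\WR(x)>\WR(0)=\lambda>f_\mathrm{avg}(v)$). The identities $f_\mathrm{avg}(v)\pm b_v=f(v^\pm)$ then give affordability, $b_v\le f_\mathrm{avg}(v)<\WR(x)$, and the inductive step splits cleanly: if \PO wins the auction she pays $b_v$ and moves to $v^-$, so her new budget is $\WR(x)-b_v>f_\mathrm{avg}(v)-b_v=f(v^-)$; if \PT wins with some bid $b'>b_v$ and moves to any $u\in\neig(v)$, then \PO's new budget is $\WR(x)+b'>f_\mathrm{avg}(v)+b_v=f(v^+)\ge f(u)$. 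Either way the invariant passes to the next configuration. Part~2 is proved by the symmetric dual argument: $\sigma_2$ bids $b_v$ and moves to $v^+$ on a win, affordability for \PT reads $b_v\le 1-f_\mathrm{avg}(v)<\WR(1-x)$, and the strict inequalities now migrate to \PO's side when she wins.

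The main technical obstacle is the uniform handling of the clipped regimes in~\cref{def: average property}, where $f(v)$ is forced to $0$ or $1$ even though $f_\pre(v)$ lies outside $[0,1]$. There $\WR(f(v))$ no longer equals $f_\mathrm{avg}(v)$, so the clean chain $\WR(x)>\WR(f(v))=f_\mathrm{avg}(v)$ must be replaced by a strict one-sided bound such as $\WR(x)>\lambda>f_\mathrm{avg}(v)$, and one must verify this bound sits on the correct side for the direction of the invariant being proved. Once this bookkeeping is in place, the remaining arguments are mechanical; as a by-product, the invariants rule out reaching a non-target sink under $\sigma_1$ and a target sink under $\sigma_2$, so no special completion of the strategies at sinks is needed.
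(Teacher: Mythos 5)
Your proof is correct and follows essentially the same route as the paper's: both players bid the half-gap $f_\mathrm{diff}(v)=\frac{f(v^+)-f(v^-)}{2}$, \PO moves to $v^-$ and \PT to $v^+$ on a win, and the invariant is propagated by the identities $f_\mathrm{avg}(v)\pm f_\mathrm{diff}(v)=f(v^\pm)$. The only difference is cosmetic bookkeeping for the clipped cases: the paper derives $\WR(x)>f_\mathrm{avg}(v)$ (resp.\ $<$) from $f(v)\ge f_\pre(v)$ when $f(v)<1$ (resp.\ $f(v)\le f_\pre(v)$ when $f(v)>0$), whereas you argue directly via $\WR(0)=\lambda>f_\mathrm{avg}(v)$ and its dual, which is equivalent.
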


\begin{proof}
Assume $\xinit>f(v_0)$. We describe $\sigma_1$ inductively. Let $(v,x)$ be a configuration such that $v$ is not a sink and $x>f(v)$. Then $f(v)<1$, and in particular $f(v)\geq f_\pre(v)$ (as either $f(v)=f_\pre(v)$ or $f(v)=0>f_\pre(v)$). After WR, \PO has budget 
\[
    \WR(x) >  \WR(f(v)) 
    \geq  \WR(f_\pre(v)) 
    =  f_\mathrm{avg}(v)
\]
We now describe the bid of \PO. Let
\[
    f_\mathrm{diff}(v)=\frac{f(v^+)-f(v^-)}{2}
\]
\PO bids $f_\mathrm{diff}(v)$ (note that $f_\mathrm{diff}(v)\le f_\mathrm{avg}(v)$). If she wins the bidding, she moves to $v^-$, at which point her budget is $x'$ satisfying 
\[
    x' =\WR(x)-f_\mathrm{diff}(v)> f_\mathrm{avg}(v)-f_\mathrm{diff}(v)=f(v^-)
\]
As desired. 
Dually, if \PO loses the bidding, then \PT bid more than $f_\mathrm{diff}(v)$, so \PO's new budget is $x'$ satisfying
\[
    x' > \WR(x) + f_\mathrm{diff}(v) > f_\mathrm{avg}(v)+f_\mathrm{diff}(v)=f(v^+)
\]
and the invariant is maintained regardless of the vertex \PT chooses to move to (since $v^+$ has maximal value of $f$ among the neighbors).

Next assume $\xinit<f(v_0)$, we describe $\sigma_2$. Given $(v,x)$ such that $x<f(v)$, we have in particular $f(v)>0$, and so $f(v)\leq f_\pre(v)$. After WR, \PO has budget
$
    \WR(x) <  \WR(f(v)) 
    \leq  \WR(f_\pre(v)) 
    =  f_\mathrm{avg}(v) 
    =  f(v^+)-f_\mathrm{diff}(v) 
    \leq  1-f_\mathrm{diff}(v)
$.
Thus, \PT has budget at least $f_\mathrm{diff}(v)$. He bids that amount, and upon winning moves to $v^+$. 
If \PO wins the bidding, her new budget is $x'< f_\mathrm{avg}(v)-f_\mathrm{diff}(v)=f(v^-)$ and so the invariant is maintained in the next vertex regardless of \PO's choice (since $v^-$ has minimal value of $f$ among the neighbors). 
If \PO loses the bidding, she has less then $f(v^+)$, as desired.
\end{proof}
Consider a function $f$ satisfying the average property, and recall that \PO wins in a play if and only if a vertex $v\in T$ is reached. Such vertices satisfy $f(v)=0$ by~\cref{def: average property}.
Thus, if every configuration $(v,x)$ in a play satisfies the invariant $x<f(v)$, then it cannot hold that $f(v)=0$ for any vertex in that play, i.e. the play does not reach $T$. Using~\cref{lem: strategies for average property} we then have the following.
\begin{corollary}
\label{cor: average property <= threshold}
    Let $\cG$ be a reachability game, and let $f$ be a function satisfying the average property. If $\xinit<f(v_0)$, then \PT has a winning strategy.
\end{corollary}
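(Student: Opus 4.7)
The plan is to derive this corollary directly from \cref{lem: strategies for average property}, part~2. Explicitly, I would let $\sigma_2$ be the \PT strategy guaranteed by that lemma, and show that against any \PO strategy $\sigma'_1$, the induced play $\play(\sigma'_1,\sigma_2,v_0,\xinit)$ never visits a vertex in $T$, so \PT wins by definition of the reachability objective.

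The central observation is the interaction between the invariant $x<f(v)$ and the boundary condition imposed by the average property on sinks in $T$. First I would invoke \cref{lem: strategies for average property}(2) to assert that, since $\xinit<f(v_0)$, every configuration $(v,x)$ appearing in $\play(\sigma'_1,\sigma_2,v_0,\xinit)$ satisfies $x<f(v)$. Next I would recall from \cref{def: average property} that for any sink $v\in T$ we have $f(v)=0$. Combined with the trivial fact that budgets never become negative (they lie in $[0,1]$ throughout, as $\WR$ preserves $[0,1]$ and the winner of a bid pays at most their current budget), this yields a contradiction: no configuration $(v,x)$ of the play can have $v\in T$, since that would force $0\le x<f(v)=0$.

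Hence the play avoids $T$ entirely, which by definition means \PT wins against the arbitrary opposing strategy $\sigma'_1$. This shows $\sigma_2$ is a winning strategy for \PT from $(v_0,\xinit)$.

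There is essentially no technical obstacle here: the corollary is a one-line consequence of the lemma plus the boundary convention on $T$. The only subtlety worth stating explicitly in the write-up is the implicit use of non-negativity of budgets together with the equality $f(v)=0$ for $v\in T$, since it is precisely this that makes the invariant $x<f(v)$ incompatible with reaching the target set.
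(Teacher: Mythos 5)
Your argument is correct and is essentially the paper's own proof: invoke \cref{lem: strategies for average property}(2) to keep the invariant $x<f(v)$ along every play against $\sigma_2$, then note that $f(v)=0$ for $v\in T$ together with non-negativity of budgets makes visiting $T$ impossible, so \PT wins. No gap; the explicit mention of $x\ge 0$ is the same (implicit) step the paper relies on.
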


Note that a dual argument for \PO winning when $\xinit>f(v_0)$ fails, as her losing does not require reaching $f(v)=1$.

\subsection{Games Played on Directed Acyclic Graphs}
\label{sec:threshold on DAGs}
In this section we restrict attention to directed acyclic graphs (DAGs), and show the existence of thresholds in this case. We rely on these results in~\cref{sec:threshold on General Graphs} where we generalize to all graphs.
\begin{lemma}
\label{lem: existence of threshold for reachability DAGs}
    Consider a game $\cG=\tup{G,\lambda,T}$ such that $G$ is a DAG, then $\cG$ has a unique function that satisfies the average property, and it is a threshold function.
\end{lemma}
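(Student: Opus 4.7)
My plan is to handle uniqueness/existence and the threshold property separately, with the DAG structure doing the bulk of the work in both parts.

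For uniqueness and existence of $f$ satisfying the average property, the plan is to run a backwards induction along a reverse topological order of $G$. For every sink, \cref{def: average property} pins down $f(v)$ completely (0 or 1 depending on $v\in T$). For a non-sink $v$, every neighbor is strictly lower in the topological order, so by induction the values $f(u)$ for $u\in \neig(v)$ are already uniquely determined; then $f_\mathrm{avg}(v)$, $f_\pre(v)$, and finally $f(v)=\max(\min(f_\pre(v),1),0)$ are forced. This yields a unique $f$ defined on all of $V$ that satisfies the average property by construction.

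For the threshold property, the case $\xinit<f(v_0)$ is immediate from \cref{cor: average property <= threshold}: \PT has a winning strategy. The case $\xinit>f(v_0)$ is where the DAG assumption is crucial, since as the authors note the dual of \cref{cor: average property <= threshold} fails in general. I would take the strategy $\sigma_1$ from \cref{lem: strategies for average property}, which guarantees that along any induced play every configuration $(v,x)$ satisfies $x>f(v)$. Because $G$ is a DAG, any play is finite and must terminate at a sink $v^\ast$. At that sink the invariant gives $x>f(v^\ast)$; if $v^\ast\notin T$ then $f(v^\ast)=1$, which contradicts $x\in[0,1]$. Hence $v^\ast\in T$ and \PO wins, so $\sigma_1$ is a winning strategy from $(v_0,\xinit)$.

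Combining the two directions shows that the unique $f$ satisfying the average property is a threshold function. The only slightly delicate point, and the one I would state carefully, is the termination argument: since budgets are maintained in $[0,1]$ throughout the game, the invariant $x>f(v)$ actually forces the play to avoid non-target sinks, and DAG-acyclicity guarantees the play cannot go on forever. There is no real obstacle here beyond writing the backwards induction cleanly and invoking \cref{lem: strategies for average property} together with \cref{cor: average property <= threshold} in their respective directions.
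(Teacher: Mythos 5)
Your proposal is correct and follows essentially the same route as the paper: a bottom-up (reverse topological) induction for existence and uniqueness, \cref{cor: average property <= threshold} for the case $\xinit<f(v_0)$, and the strategy of \cref{lem: strategies for average property} plus finiteness of plays on a DAG to force termination in a sink with $x>f(v^\ast)$, ruling out non-target sinks since there $f(v^\ast)=1$. No gaps.
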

\begin{proof}
We start by constructing a function $f$ that satisfies the average property. This is done bottom-up from the sinks (i.e., leaves) of $G$. Recall that for every sink $v$, if $v\in T$ then $f(v)=0$ and otherwise $f(v)=1$. In particular, note that any function that satisfies the average property must coincide with $f$ on the sinks.
We proceed by defining $f$ inductively: consider a vertex $v$ such that $f(u)$ is defined for every $u\in \neig(v)$, then we can compute $v^+$ and $v^-$, and proceed to define $f(v)$ as per \cref{def: average property}. Note that since $G$ is acyclic, this process terminates. Moreover, $f$ is defined uniquely for every vertex using this process.
Denote the unique function above as $\Th$, we prove that it is indeed a threshold function. Consider a configuration $(v_0,\xinit)$.

If $\xinit<\Th(v_0)$ then \PT wins by \cref{cor: average property <= threshold}.

If $\xinit>\Th(v_0)$ then since $G$ is a DAG, a dual argument to \cref{cor: average property <= threshold} applies: every play ends in a sink, and so \PO losing is equivalent to the play reaching a vertex $v$ with $\Th(v)=1$. The strategy given by \cref{lem: strategies for average property} maintains $\Th(v)<x\leq 1$, thus winning for \PO.
\end{proof}

\begin{example}
We illustrate \cref{lem: existence of threshold for reachability DAGs} in \cref{fig: example DAG} with $\lambda=\frac16$ (the names of the vertices are relevant for \cref{sec:threshold on General Graphs}). Observe that for $\lambda=\frac16$ we have $\WRinv(x)=\frac{3}{2}x-\frac14$. Thus, in $(v_1,1)$ we have $\Th_{\mathrm{avg}}(v_1,1)=\frac12$, so $\Th(v_1,1)=\WRinv(\frac12)=\frac12$. 
In $(v_0,1)$ we have $\Th_{\mathrm{avg}}(v_0,1)=1$, so $\Th_{\mathrm{pre}}=\WRinv(1)=\frac54>1$, and therefore $\Th(v_0,1)=1$.
Finally, $\Th_{\mathrm{avg}}(v_0,0)=\frac34$ so $\Th(v_0,0)=\WRinv(\frac34)=\frac78$.
\end{example}

\begin{figure}[ht]
\centering  
\begin{tikzpicture}[auto,node distance=2.8cm,scale=1]
    \tikzset{every state/.style={ellipse, draw, minimum size=15pt, inner sep=0}};
    \node (q00) [state] at (0,0) {$(v_0,0),\frac{7}{8}$};
    \node (q01) [state] at (3,0.5) {$(v_0,1),1$};
    \node (q11) [state] at (3,-0.5) {$(v_1,1),\frac{1}{2}$};
    \node (q02) [state] at (6,1) {$(v_0,2),1$};
    \node (q12) [state] at (6,0.34) {$(v_1,2),1$};
    \node (q22) [state] at (6,-0.33) {$(v_2,2),1$};
    \node (q32) [state, accepting] at (6,-1) {$(v_3,2),0$};
    \path [-stealth]
    (q00) edge [bend left=0] (q11)
    (q11) edge [bend right=0] (q22) 
    (q11) edge [bend left=0] (q32)
    (q00) edge [bend right=0] (q01)
    (q01) edge [bend left=0] (q12)
    (q01) edge [bend right=0] (q02)
    ;
\end{tikzpicture}
\caption{A DAG (in this case, a tree) and the unique values satisfying the average property, for $n=2$ and $\lambda=\frac{1}{6}$.}
\label{fig: example DAG}
\Description{A tree of $7$ vertices}
\end{figure}
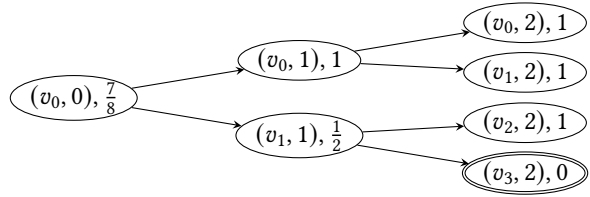

We now turn our attention to what happens when the initial budget equals exactly the threshold.

\begin{proposition}
\label{claim: strength of threshold for reachability DAGs}
    Let $\cG$ be a game played on a DAG. Then $\Th$ is either 1-strong or 2-strong, and satisfies the following.
    \begin{itemize}
        \item For a sink $v$, it is 1-strong if and only if $v\in T$.
        \item For a non-sink $v$ and $\lambda=0$, it is 1-strong if and only if $T$ is reachable from $v$ (equivalently, $\Th(v)<1$).
        \item For a non-sink $v$ and $\lambda>0$, it is 1-strong if and only if $\Th_\pre(v)\leq1$.
    \end{itemize}
\end{proposition}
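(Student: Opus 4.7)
The plan is to proceed by structural induction on the DAG's topological order, from sinks upward. The base case is immediate: at a sink $v$, no moves occur, so if $v \in T$ then \PO wins regardless of budget ($\Th(v)=0$ is 1-strong), and otherwise \PT wins regardless ($\Th(v)=1$ is 2-strong). For the inductive step on a non-sink $v$, I invoke \cref{lem: existence of threshold for reachability DAGs} to write $\Th(v)=\max(\min(\Th_\pre(v),1),0)$, and split into three sub-cases according to whether $\Th_\pre(v)$ lies in $[0,1]$, above $1$, or below $0$.

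In the unclamped sub-case $\Th_\pre(v)\in[0,1]$, \PO holds exactly $\Th_{\mathrm{avg}}(v)$ after WR. Specializing the strategy of \cref{lem: strategies for average property} to $\xinit=\Th(v)$, \PO bids $f_{\mathrm{diff}}(v):=(\Th(v^+)-\Th(v^-))/2$ and moves to $v^-$ upon winning. A direct check shows that for any \PT bid at most $f_{\mathrm{diff}}(v)$, \PO wins (using tie-breaking) and lands at $(v^-,\Th(v^-))$ exactly, while for any larger bid \PT wins but \PO arrives strictly above threshold at \PT's chosen successor. Consequently $v$ is 1-strong iff $v^-$ is 1-strong. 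I then verify that $\Th(v^-)<1$ in this sub-case: otherwise all neighbors of $v$ have $\Th=1$, forcing either $\Th_\pre(v)=\WRinv(1)>1$ when $\lambda>0$ (contradicting the sub-case) or the degenerate situation $\lambda=0$ and $\Th(v)=1$. In the generic situation the inductive hypothesis yields $v^-$ 1-strong, and hence $v$ 1-strong. The degenerate case $\lambda=0,\Th(v)=1$ has every neighbor 2-strong by induction, and \PT's trivial zero-bid strategy forces \PO into a 2-strong successor regardless of her move, so $v$ is 2-strong.

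For the clamped-above sub-case $\Th_\pre(v)>1$ (necessarily with $\lambda>0$ and $\Th(v)=1$), I construct a winning strategy for \PT. After WR, \PO holds $1-\lambda$ while \PT holds $\lambda$. Setting $\mu:=\max(0,(1-\lambda)-\Th(v^-))$ and $\delta:=\Th(v^+)-(1-\lambda)$, the hypothesis $\Th_{\mathrm{avg}}(v)>1-\lambda$, combined with $\Th(v^+)\leq 1$, gives $\mu<\min(\delta,\lambda)$ by a short calculation, so I can pick any $b^*$ in this non-empty open interval. \PT's strategy is to bid $b^*$ and move to $v^+$ upon winning: for any \PO bid of at least $b^*$, she arrives at her chosen successor with budget less than $\Th(v^-)\leq\Th(u)$ and loses by induction; for any smaller \PO bid, \PT wins and moves \PO to $v^+$ with budget $(1-\lambda)+b^*<\Th(v^+)$, again losing by induction. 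Hence $v$ is 2-strong. The clamped-below sub-case $\Th_\pre(v)<0$ is entirely dual, making $v$ 1-strong via a symmetric winning strategy for \PO. The main obstacle I foresee is this clamped-above construction: although \PO seems to enjoy a budget advantage after WR, her $1-\lambda$ falls short of the ``demand'' $\Th_{\mathrm{avg}}(v)>1-\lambda$ set by the average property, and the gap between $\mu$ and $\delta$ encoded in $\Th_\pre(v)>1$ is precisely what lets \PT select a single bid that simultaneously blocks both of \PO's avenues.
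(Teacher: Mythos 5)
Your proof is correct and takes essentially the same route as the paper: bottom-up induction over the DAG, the bid-$\Th_{\mathrm{diff}}$ strategy (reaching $v^-$ at exactly its 1-strong threshold, or overshooting $\Th(v^+)$ on a lost bid) when $\Th_\pre(v)\le 1$, and a strategy for Player~2 exploiting $\WR(1)=1-\lambda<\Th_{\mathrm{avg}}(v)$ when $\Th_\pre(v)>1$ --- indeed your interval $(\mu,\min(\delta,\lambda))$ for $b^*$ contains the paper's concrete bid $\Th_{\mathrm{diff}}(v)$. The only cosmetic differences are that you split cases by the clamping of $\Th_\pre(v)$ rather than by $\lambda=0$ versus $\lambda>0$, you argue the $\lambda=0$, $\Th(v)=1$ case directly instead of via the cited reachability characterization (you still implicitly use ``$\Th(u)<1$ iff $T$ reachable from $u$'' to phrase the conclusion as reachability), and your asserted biconditional ``$v$ is 1-strong iff $v^-$ is'' is only established in the direction you actually use, which is harmless.
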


\begin{proof}
We consider the different cases mentioned.
\paragraph{The case that $v$ is a sink}
The thresholds $0$ and $1$ for $v\in T$ and $v\notin T$ respectively, are trivially 1-strong and 2-strong.

\paragraph{The case that $v$ is not a sink and $\lambda=0$}
It is shown in~\cite{avni2021bidding} that $T$ is reachable from $v$ if and only if $\Th(v)<1$. If it is not, then the threshold is clearly 2-strong. Assume $T$ is reachable from $v$, and we show that $\Th(v)$ is 1-strong, by induction from the sinks backwards. 

We claim that for every $u\in \neig(v)$, we have that $T$ is reachable from $u$ if and only if $\Th(u)$ is 1-strong. Indeed, if $u$ is a sink then $T$ is reachable from $u$ if and only if $u\in T$ if and only if $\Th(u)$ is 1-strong; otherwise, the equivalence follows from the induction hypothesis. Note that, since $v^-$ is defined to be chosen arbitrarily among the neighbors of $v$ with the lowest thresholds, we can choose it to have a 1-strong or a weak threshold when possible (intuitively, we consider a 2-strong thresholds ``higher'' than other thresholds of the same value). Under this convention, assume by way of contradiction that $T$ is not reachable from $v^-$. Then $\Th(v^-)=1$ is 2-strong and so $\Th(u)=1$ is a 2-strong threshold for all $u\in \neig(v)$. In this case, $T$ is not reachable from any such $u$, in contradiction to it being reachable from $v$. Therefore, $T$ is reachable from $v^-$, and so $\Th(v^-)$ is 1-strong. It remains to show that \PO has a winning strategy from $(v,\Th(v))$. Indeed, the strategy of bidding $\Th_\mathrm{diff}(v)$ wins similarly to before: Upon winning the bidding she has exactly $\Th(v^-)$, which wins from $v^-$. Upon losing the bidding she has strictly more than $\Th(v^+)$ and so she wins from any possible next vertex.

\paragraph{The case that $v$ is not a sink and $\lambda>0$}
We show that $\Th(v)$ is 1-strong if and only if $\Th_\pre(v)\leq 1$.

Assume $\Th_\pre(v)\leq 1$, and consider the configuration $(v,\Th(v))$. \PO's budget during the first bidding, namely $\WR(\Th(v))$, is at least $\Th_\mathrm{avg}(v)$, and she wins by bidding $\Th_\mathrm{diff}(v)$ similarly to before, provided that $\Th(v^-)$ is 1-strong. Indeed, $\Th(v^-)\leq \Th_{\mathrm{avg}}(v)\ = \WR(\Th_\pre(v)) < 1$, the strict inequality holding due to $\lambda>0$. If $v^-$ is a sink, it follows that $\Th(v^-)=0$, and it is 1-strong. Otherwise,  $\Th_\pre(v^-)\leq1$ and so $\Th(v^-)$ is 1-strong by the induction hypothesis.

If $\Th_\pre(v)>1$, then $\Th(v)<\Th_\pre(v)$. During the first bid \PO has $\WR(\Th(v))<\WR(\Th_\pre(v))=\Th_\mathrm{avg}(v)$, and \PT wins as before.
\end{proof}

Note that if $\Th(v)<1$ then $\Th_\pre(v)\leq 1$, so we have the following.
\begin{corollary}
\label{cor: threshold<1 is 1-strong for DAG}
    Let $\cG$ be a reachability game played on a DAG. Whenever $\Th(v)<1$, it is 1-strong.
\end{corollary}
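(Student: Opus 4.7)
The plan is to derive the corollary directly from \cref{claim: strength of threshold for reachability DAGs} by case analysis on the three cases listed there, verifying in each that $\Th(v)<1$ entails the 1-strong condition spelled out in that proposition.

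First I would dispose of the sink case: if $v$ is a sink with $\Th(v)<1$, then $\Th(v)\ne 1$, forcing $v\in T$ by the definition of the average property on sinks, and the proposition gives 1-strongness. Next, for the case $v$ non-sink and $\lambda=0$, the proposition states that 1-strongness is equivalent to $\Th(v)<1$, so the implication is immediate.

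The only case requiring a small observation is $v$ non-sink with $\lambda>0$, where the proposition's characterization is $\Th_\pre(v)\le 1$. Here I would unpack \cref{def: average property}: since $\Th(v)=\max(\min(\Th_\pre(v),1),0)$, the contrapositive says that $\Th_\pre(v)>1$ forces $\Th(v)=1$. Hence $\Th(v)<1$ implies $\Th_\pre(v)\le 1$, and 1-strongness follows from the proposition.

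No step is a real obstacle: the content of the corollary is essentially a uniform restatement of three different characterizations, and the work is confined to observing that, via the truncation in the definition of $f_\pre$ vs.\ $f$, the condition $\Th(v)<1$ is sufficient in each regime to meet the proposition's case-specific hypothesis.
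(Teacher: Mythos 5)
Your proposal is correct and follows the same route as the paper: the paper derives the corollary from \cref{claim: strength of threshold for reachability DAGs} via the single observation that $\Th(v)<1$ forces $\Th_\pre(v)\le 1$ (by the truncation in \cref{def: average property}), which is exactly your third case, with your sink and $\lambda=0$ cases being the trivially subsumed parts of that proposition. Your write-up just makes the case split explicit; there is no substantive difference.
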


\subsection{Games Played on General Graphs}
\label{sec:threshold on General Graphs}
We are now ready for our main result.
\begin{theorem}
\label{thm: existence of threshold for reachability games}
    Every game $\cG=\tup{G,\lambda,T}$ has a threshold function which satisfies the average property.
\end{theorem}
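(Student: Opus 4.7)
The plan is to reduce the general case to the DAG setting of \cref{lem: existence of threshold for reachability DAGs} via a finite-horizon unravelling. For each $n \in \bbN$, I would construct $\cG_n$ with vertex set $V \times \{0, 1, \ldots, n\}$, edges from $(v, i)$ to $(u, i+1)$ whenever $(v, u) \in E$ and $i < n$, target set $\{(v, i) : v \in T\}$, and every $(v, n)$ with $v \notin T$ treated as a losing sink. Since $\cG_n$ is a DAG, \cref{lem: existence of threshold for reachability DAGs} yields a unique threshold $\Th_n$ satisfying the average property. Setting $T_n(v) := \Th_n(v, 0)$, and using that the subgame of $\cG_{n+1}$ starting at level $1$ is isomorphic to $\cG_n$ starting at level $0$, the average-property recursion reads
\[
    T_{n+1}(v) \;=\; \max\!\Bigl(\min\!\Bigl(\WRinv\!\Bigl(\tfrac{\max_{u \in \neig(v)} T_n(u) + \min_{u \in \neig(v)} T_n(u)}{2}\Bigr),\, 1\Bigr),\, 0\Bigr)
\]
on non-sinks, while $T_n$ remains constant on sinks.

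The first technical step is to prove $T_n(v) \geq T_{n+1}(v)$ for every $v$ and $n$, by induction on $n$. The base case $T_0 \geq T_1$ is immediate since $T_0(v) \in \{0, 1\}$. For the inductive step, I rely on monotonicity of the recursion: if $g \leq h$ pointwise on $\neig(v)$, then $\min_u g(u) \leq \min_u h(u)$ and $\max_u g(u) \leq \max_u h(u)$, whence the averages satisfy the same inequality, $\WRinv$ preserves it (being increasing), and the clamping to $[0,1]$ is monotone.

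Because $T_n(v) \in [0,1]$ is monotonically non-increasing, the pointwise limit $\Th(v) := \lim_{n \to \infty} T_n(v)$ exists. Since $V$ is finite, the convergence is uniform, so $\max$/$\min$ over $\neig(v)$ commute with the limit. I then pass to the limit in the displayed recursion: continuity of $\WRinv$ and the clamping, together with $\max_{u \in \neig(v)} T_n(u) \to \max_{u \in \neig(v)} \Th(u) = \Th(v^+)$ and similarly for the min, give precisely the defining equation of the average property for $\Th$. On sinks $T_n$ is constant and matches the required boundary values.

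Finally, to show $\Th$ is the threshold: if $\xinit < \Th(v_0)$, \cref{cor: average property <= threshold} hands \PT a winning strategy. If $\xinit > \Th(v_0)$, monotonicity gives some $n$ with $\xinit > T_n(v_0) = \Th_n(v_0, 0)$, so by \cref{lem: existence of threshold for reachability DAGs}, \PO has a winning strategy in $\cG_n$ that reaches $T$ within $n$ steps against any \PT. The same bids and moves win in $\cG$, and this can be cast as a memoryless strategy by always acting, at configuration $(v, x)$, according to the smallest $k$ with $x > T_k(v)$; one checks that this number strictly decreases each step, and reaches $0$ exactly when a target is entered. The main obstacle is that the extremizing neighbors $v^+, v^-$ in the recursion for $T_{n+1}$ may shift with $n$; this is resolved by the uniform convergence of $T_n$ on the finite set $V$, which lets $\max$/$\min$ commute with the limit without tracking individual extremizers across levels.
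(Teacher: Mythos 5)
Your proposal is correct and follows essentially the same route as the paper: unravel the game into the finite-horizon DAGs $\cG_n$, take the non-increasing pointwise limit of their thresholds, show the limit satisfies the average property, and conclude via \cref{cor: average property <= threshold} below the threshold and via a finite-horizon winning strategy above it. The only differences are local and both sound: you derive monotonicity of $\Th_n((v,0))$ from monotonicity of the average-property recursion rather than by reusing a winning strategy of $\cG_n$ inside $\cG_{n+1}$, and you dispose of the shifting extremizers by writing the recursion with $\max/\min$ over neighbors and passing to the limit by continuity, where the paper instead tracks the set $N^+(v)$ of eventual maximizers — your handling of that second point (and the explicit memoryless reformulation of the winning strategy) is, if anything, slightly cleaner.
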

\begin{proof}
We construct a function $\Th(v)$, show that it satisfies the average property, and finally show that it constitutes a threshold function.

The first step is to reduce the setting to that of DAGs. Denote $G=(V,E)$ and let $n\in \bbN$. We turn to define a game that is played on a DAG (specifically, the unravelling of $G$ for $n$ steps) and, intuitively, corresponds to the objective of winning in $\cG$ within at most $n$ steps. 
Consider the DAG $\unravel_n=(V\times \{0,\ldots,n\},E_n)$, where $E_n=\{(u_i,v_{i+1})\mid(u,v)\in E,0\leq i\leq n-1\}$. As an example, the underlying graph of the game depicted in \cref{fig: average property non uniqueness} yields the DAG depicted in \cref{fig: example DAG} for $n=2$.

Next, we define the game $\cG_n=\tup{\unravel_n,\lambda,T\times \{0,\ldots,n\}}$. By \cref{lem: existence of threshold for reachability DAGs} we have that $\cG_n$ has a threshold function $\Th_n$. 

Let $v\in V$. We consider the relation between $\Th_n((v,0))$ and $\Th_{n+1}((v,0))$. Assume \PO has a winning strategy $\sigma_1$ for $\cG_n$, starting in configuration $((v,0),x_\mathrm{init})$. Following $\sigma_1$ ensures, in particular, that the play does not reach $(V\times\{n\})\setminus(T\times\{n\})$, as those are sinks not belonging to the target. 
Observe that $\unravel_{n+1}$ is obtained from $\unravel_n$ by (possibly) adding outgoing edges only from $(V\times\{n\})\setminus(T\times\{n\})$.
The strategy $\sigma_1$ therefore wins, starting in $((v,0),x_\mathrm{init})$, in $\cG_{n+1}$ as well. Intuitively, winning $\cG$ in at most $n$ steps particularly wins it in at most $n+1$ steps. Thus, $\Th_{n}((v,0))\ge \Th_{n+1}((v,0))$, i.e., the sequence $\{\Th_n((v,0))\}_{n=0}^\infty$ is non-increasing. This sequence is also bounded from below by $0$, and therefore converges. We  define the threshold $\Th$ for $\cG$ as the pointwise-limit 
\[
    \Th(v)=\lim_{n\to\infty}\Th_n((v,0))
\]

Next, we prove that $\Th$ satisfies the average property.
For a sink $v\in T$, we have $\Th_n((v,0))=0$ for all $n$ (since $(v,0)\in T\times\{0,\ldots,n\}$), and so $\Th(v)=0$ as needed. For a sink $v\notin T$, we have $\Th_n(v)=1$ for all $n$ (since $(v,0)$ is a sink in $\unravel_n$ and does not belong to the target), and so $\Th(v)=1$ as needed. 

For the following, fix a non-sink $v$. We need to show (as per~\cref{def: average property}) that 
\begin{equation}
\label{eq: average property for Th}
    \Th(v) = \max\left(\min\left(\WRinv\left(\frac{\Th(v^+)+\Th(v^-)}{2}\right),1\right),0\right)
\end{equation}
Note that $(v,0)$ is not a sink in $\unravel_n$ for all $n\geq 1$. For every $n\geq 1$, let 
\[
    v^+_n = \argmax_{u\in \neig(v)} \Th_n(u) \quad \text{ and }\quad v^-_n  = \argmin_{u\in \neig(v)} \Th_n(u)
\]
For every $u\in \neig(v)$, consider the sub-DAG of $\unravel_n$ starting in $(u,1)$. Observe that this sub-DAG is isomorphic to the sub-DAG of $\unravel_{n-1}$ starting in $(u,0)$ (with the difference only being the indices of the levels). It follows that $\Th_n((u,1))=\Th_{n-1}((u,0))$. By the average property for $\cG_n$, we have 
\[
\Th_n((v,0))=\max\left(\min\left(\WRinv\left(\frac{\Th_{n-1}((v^+_n,0))+\Th_{n-1}((v^-_n,0))}{2}\right),1\right),0\right)
\]
Note that this is a continuous function of $\Th_{n-1}((v^+_n,0))$ and $\Th_{n-1}((v^-_n,0))$. 
In order to show \cref{eq: average property for Th}, it is therefore enough to show 
\begin{align}
    \label{eq: limit of max threshold} \lim_{n\to\infty}\Th_{n-1}((v^+_n,0))=\Th(v^+) \\
    \label{eq: limit of min threshold} \lim_{n\to\infty}\Th_{n-1}((v^-_n,0))=\Th(v^-) 
\end{align}
We show that \cref{eq: limit of max threshold} holds, and the proof for \cref{eq: limit of min threshold} is analogous. 
Note that $v^+_n$ might be a different vertex for each $n$, and so the left hand side of \cref{eq: limit of max threshold} does not describe the limit of thresholds for a single vertex. 
Intuitively, however, there is a set of vertices that appear infinitely often in this limit whose corresponding limits are all equal, which enables us to conclude the claim.

Formally, let $N^+(v)=\{u\in \neig(v) \mid \Th(u)=\Th(v^+)\}$. Then for every $u^+\in N^+(v)$ and $u'\in \neig(v)\setminus N^+(v)$, we have that $\Th(u')<\Th(u^+)$. By the definition of $\Th$ as a limit, it follows that there exists $n_{u^+,u'}\in \bbN$ such that $\Th_n((u',0))<\Th_n((u^+,0))$ for every $n>n_{u^+,u^-}$. 
Taking $n_0$ as the maximum over all such $n_{u^+,u^-}$, it follows that $v^+_n\in N^+(v)$ for every $n>n_0$. Since $N^+(v)$ is finite, and $\Th_{n-1}((u,0))\overset{n\to\infty}{\longrightarrow}\Th(v^+)$ for every $u\in N^+(v)$, 
we have that $\Th_{n-1}((v^+_n,0))\overset{n\to\infty}{\longrightarrow}\Th(v^+)$ as well.

Finally, we show that $\Th$ is a threshold function for $\cG$. 
Assume $\cG$ starts in $(v,x_\mathrm{init})$.

If $\xinit>\Th(v)$ then there exists $n$ such that $\xinit>\Th_n(v)$, meaning \PO has a strategy that wins in at most $n$ steps, and in particular wins in $\cG$.

If $\xinit<\Th(v)$, then since $\Th$ satisfies the threshold property, it follows from \cref{cor: average property <= threshold} that \PT has a winning strategy.
\end{proof}

\subsection{Characterization of Thresholds in Terms of the Average Property}
\label{sec:threshold in terms of average property}
Recall that for games played on DAGs, the threshold is the unique function satisfying the average property, and it can be computed inductively from the sinks. For general games with $\lambda=0$, it is known~\cite{avni2021bidding} that there is still a unique function satisfying the average property, and that finding the threshold is in $\mathsf{NP}\cap\mathsf{coNP}$ (and in $\mathsf{P}$ for graphs with out-degree $2$). 

In stark contrast, this uniqueness no longer holds in Robin Hood games for $\lambda>0$, as we now demonstrate.
Consider the game in \cref{fig: average property non uniqueness}, with $\lambda=\frac{1}{4}$. It can be verified that the numbers on the vertices are the thresholds. However, for any $t\in[0,1]$, the function defined by $f(v_0)=t$ and $f(v)=\Th(v)$ for $v\neq v_0$ satisfies the average property. Indeed, this can be easily checked for $v\neq v_0$ (and also follows from~\cref{lem: existence of threshold for reachability DAGs}, since from without $v_0$ the game is a DAG). 
For $v_0$, note that $\WRinv(x)=2(x-\frac{1}{4})=2x-\frac12$, so 
\[
    f_\pre(v_0) =  \WRinv(f_\mathrm{avg}(v_0)) 
    =  \WRinv\left(\frac{t+\frac{1}{2}}{2}\right) 
    =  t 
    =  f(v_0)
\]
and since $f(v_0)\in[0,1]$, the average property holds.

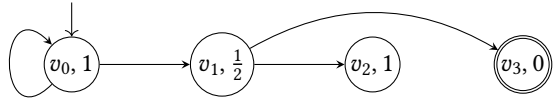
\begin{figure}[ht]
\centering  
\begin{tikzpicture}[auto,node distance=2.8cm,scale=1]
    \tikzset{every state/.style={minimum size=20pt, inner sep=1},};
    \node (q0) [initial above, state, initial text = {}] at (0,0) {$v_0,1$};
    \node (q1) [state] at (2,0) {$v_1,\frac{1}{2}$};
    \node (q2) [state] at (4,0) {$v_2,1$};
    \node (q3) [state, accepting] at (6,0) {$v_3,0$};
    \path [-stealth]
    (q0) edge [loop, distance=30,out=225,in=135] node {} (q0)
    (q0) edge (q1)
    (q1) edge (q2)
    (q1) edge [bend left] (q3);
\end{tikzpicture}
\caption{Infinitely many functions satisfying the average property, for $\lambda=\frac{1}{4}$. The numbers are the thresholds.}
\label{fig: average property non uniqueness}
\Description{A graph consisting of an initial vertex $v_0$ with value $1$ with a self loop and an edge going to $v_1$. $v_1$ has value $\frac{1}{2}$, and edges to $v_2$ and $v_3$ with values $1,0$ respectively.}
\end{figure}

As mentioned above, the threshold in~\cref{fig: average property non uniqueness} satisfies $\Th(v_0)=1$, which coincides with the maximal value of $t$. As it turns out, this is not a coincidence.
\begin{theorem}
\label{thm: threshold is max avg property}
    Consider a game $\cG$. The threshold $\Th$ is the point-wise maximum over the functions satisfying the average property.
\end{theorem}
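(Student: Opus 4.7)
The plan is to show two inclusions. First, by \cref{thm: existence of threshold for reachability games} the threshold $\Th$ itself satisfies the average property, so $\Th$ lies in the family of functions over which we take the maximum, hence $\Th(v) \le \sup\{f(v) : f \text{ satisfies the average property}\}$ pointwise. It therefore suffices to prove the other direction, namely that every function $f$ satisfying the average property is pointwise bounded above by $\Th$.

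For that I would argue by contradiction: suppose there exist $f$ satisfying the average property and a vertex $v_0$ with $f(v_0) > \Th(v_0)$. Pick any $\xinit$ strictly between them, so $\Th(v_0) < \xinit < f(v_0)$. Since $\xinit > \Th(v_0)$, the definition of the threshold gives \PO a winning strategy $\sigma_1$ from $(v_0,\xinit)$, meaning every play consistent with $\sigma_1$ reaches $T$. On the other hand, since $\xinit < f(v_0)$ and $f$ satisfies the average property, \cref{cor: average property <= threshold} (equivalently, the second clause of \cref{lem: strategies for average property}) gives \PT a strategy $\sigma_2$ such that every configuration $(v,x)$ encountered satisfies $x < f(v)$. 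In particular, no vertex $v$ with $f(v) = 0$ can ever be reached, because that would require $x < 0$, which is impossible. By \cref{def: average property}, every target vertex $v \in T$ satisfies $f(v) = 0$, so the play consistent with $\sigma_2$ never reaches $T$. Pitting $\sigma_1$ against $\sigma_2$ from $(v_0,\xinit)$ yields a single play that both reaches $T$ and avoids $T$, a contradiction.

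Hence $f(v) \le \Th(v)$ for every $v \in V$ and every $f$ satisfying the average property, and combined with the fact that $\Th$ itself attains this bound we conclude that $\Th(v) = \max\{f(v) : f \text{ satisfies the average property}\}$ pointwise. The main subtlety I anticipate is being precise about the target vertices playing the role of witnesses: the invariant $x < f(v)$ forbids visits exactly to vertices where $f$ vanishes, and it is the average property on sinks that forces $f$ to vanish on all of $T$, which is what lets the invariant contradict reachability. Everything else is a direct appeal to the previously established lemma and corollary.
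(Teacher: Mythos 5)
Your proposal is correct and follows essentially the same route as the paper: both note that $\Th$ itself satisfies the average property (so it is dominated by the pointwise supremum), and then rule out $f(v)>\Th(v)$ by picking $\xinit$ strictly in between and observing that the threshold definition gives \PO a winning strategy while \cref{cor: average property <= threshold} gives \PT one, a contradiction. Your extra unpacking of why the invariant $x<f(v)$ blocks reaching $T$ is just a restatement of the corollary's proof and does not change the argument.
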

\begin{proof}
Let $A$ be the set of functions satisfying the average property and let $m(v)=\sup_{f\in A}f(v)$. Since $\Th$ satisfies the average property, we have $\Th(v)\leq m(v)$. We additionally have $f(v)\leq\Th(v)$ for every $f\in A$; indeed, assume by way of contradiction that $\Th(v)<f(v)$, and let $\Th(v)<x_\mathrm{init}<f(v)$. By the threshold definition, \PO has a winning strategy, but by \cref{cor: average property <= threshold}, \PT also does. Therefore, $m(v)\leq\Th(v)$, and overall $\Th(v)=m(v)$.
\end{proof}

\subsection{MILP for Computing the Threshold}
\label{sec:threshold complexity}
While \cref{thm: existence of threshold for reachability games}
shows the existence of thresholds, it uses a limit and is therefore not constructive. However, using \cref{thm: threshold is max avg property} we can algorithmically compute the threshold using mixed-integer linear programming (MILP).
\begin{theorem}
\label{thm:milp for threshold}
    Given a game $\cG$, we can efficiently construct a MILP instance $\cI$ whose solution is the threshold function for $\cG$.
\end{theorem}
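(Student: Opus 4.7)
The plan is to leverage \cref{thm: threshold is max avg property}, which characterizes $\Th$ as the point-wise maximum over all functions $f\colon V\to[0,1]$ satisfying the average property. I will encode ``$f$ satisfies the average property'' as a system of mixed-integer linear constraints and take as objective $\sum_{v\in V} f_v$; since every feasible $f$ is point-wise dominated by $\Th$ and $\Th$ itself is feasible, the unique optimum must be $\Th$.

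The variables would be, for each $v\in V$, a continuous variable $f_v\in[0,1]$, with $f_v=0$ hard-coded for $v\in T$ and $f_v=1$ hard-coded for sinks $v\notin T$. For each non-sink $v$ I add auxiliary continuous variables $M_v,m_v,p_v$ intended to represent $\max_{u\in\neig(v)} f_u$, $\min_{u\in\neig(v)} f_u$, and $f_\pre(v)$. The linking equation $(1-2\lambda)\,p_v + \lambda = (M_v+m_v)/2$ is a single linear equation because $\WRinv$ is affine.

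The first non-trivial encoding is the argmax/argmin. For each neighbor $u\in\neig(v)$ I introduce binary variables $y^{+}_{v,u}$ and $y^{-}_{v,u}$, with $\sum_{u}y^{+}_{v,u}=1$ and $\sum_{u}y^{-}_{v,u}=1$. I then post $M_v\ge f_u$ and $M_v\le f_u+(1-y^{+}_{v,u})$ for every $u\in\neig(v)$ (symmetrically for $m_v$). Because $f_u\in[0,1]$, the constant $1$ is a valid big-M, so these constraints exactly force $M_v=\max_u f_u$ and $m_v=\min_u f_u$. The second non-trivial encoding is the clipping $f_v=\max(\min(p_v,1),0)$. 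I use three binary indicators $z^{\mathrm{lo}}_v,z^{\mathrm{mid}}_v,z^{\mathrm{hi}}_v$ with $z^{\mathrm{lo}}_v+z^{\mathrm{mid}}_v+z^{\mathrm{hi}}_v=1$, and big-M inequalities enforcing: on $z^{\mathrm{lo}}_v=1$, $p_v\le 0$ and $f_v=0$; on $z^{\mathrm{mid}}_v=1$, $0\le p_v\le 1$ and $f_v=p_v$; on $z^{\mathrm{hi}}_v=1$, $p_v\ge 1$ and $f_v=1$. A big-M of size $1/(1-2\lambda)$ dominates the range of $p_v$ and depends only on the instance.

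By \cref{def: average property}, the feasible solutions of this MILP are exactly the functions satisfying the average property. By \cref{thm: existence of threshold for reachability games}, $\Th$ is feasible; by \cref{thm: threshold is max avg property}, every feasible $f$ satisfies $f\le\Th$ point-wise, so the linear objective $\sum_{v} f_v$ is uniquely maximized at $f=\Th$. The number of continuous and binary variables and of linear constraints is polynomial in $|V|+|E|$, and the big-M coefficients are computable from $\cG$, so the construction is efficient. The main subtlety I expect is verifying that the case-distinction encoding of the clipping is air-tight at the boundaries $p_v\in\{0,1\}$ (where two indicators can simultaneously be chosen, but both yield the same value of $f_v$, so no spurious optima are introduced) and justifying that the big-M constants suffice uniformly across the instance.
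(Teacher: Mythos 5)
Your proposal is correct and follows essentially the same route as the paper: encode the average property as mixed-integer linear constraints (using binary indicators and big-M gadgets bounded in terms of $\WRinv$), maximize $\sum_{v} f_v$, and conclude via \cref{thm: existence of threshold for reachability games} and \cref{thm: threshold is max avg property} that the unique optimum is $\Th$. The only differences are cosmetic choices of gadget (one-hot argmax/argmin selectors and a three-way clipping indicator versus the paper's pairwise $\min$/$\max$ big-M encoding), which do not change the argument.
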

\begin{proof}
Consider a game $\cG$ with vertices $V=v_1,\ldots, v_m$. 
The average property can be readily expressed as a set of constraints on the variables $\Th(v)$ for every $v\in V$ containing linear and $\min$/$\max$ expressions (as per~\cref{def: average property}). 
Observe that this complies with MILP, since e.g., the expression $\min\{x_1,x_2\}$ can be removed by introducing a new variable $X$, a variable $b\in \{0,1\}$ and the following constraints:
\[
     X\leq x_1 \wedge 
     X\leq x_2\wedge 
     X\geq x_1-2M(1-b)\wedge 
     X\geq x_2-2Mb
\]
Where $M$ is a bound such that $|x_1|,|x_2|<M$. In our case, $M=\WRinv(1)$ is such a bound.
Indeed, the first two constraints ensure $X\le \min\{x_1,x_2\}$, and the latter two ensure that either $X\ge x_1$ (if $b=1$) or $X\ge x_2$ (if $b=0$). The choice of $M$ ensures that the two latter constraints are satisfiable.

Finally, we maximize the objective $\sum_{v\in V}\Th(v)$. The solution then equals the threshold by \cref{thm: threshold is max avg property}.
\end{proof}

\begin{example}
\label{xmp:milp}
     We demonstrate the construction of the MILP for the game depicted in \cref{fig: average property non uniqueness}. For each $v_i$, we use a variable $v_i$ to represent the value $f(v_i)$. The resulting MILP is in \cref{tab:milp}.

\begin{table}[h]
    \centering
    \begin{tabular}{|c|l|}
    \hline
    \multicolumn{2}{|c|}{maximize $v_1+v_2+v_3+v_4$ subject to:}\\
    \hline
    \hline
    \label{eq: MILP_table_sinks} (C1) & $v_2 = 1 \wedge v_3 = 0$ \\ \hline
    \label{eq: MILP_table_v^-_1} (C2) &
    \begin{tabular}[c]{@{}l@{}}
    $v^-_1 \leq v_2 \wedge v^-_1 \leq v_3$\\
    $v^-_1 \geq v_2 - 2M(1 - b^-_1) \wedge v^-_1 \geq v_3 - 2M b^-_1$
    \end{tabular} \\ \hline
    \label{eq: MILP_table_v^+_1} (C3) &
    \begin{tabular}[c]{@{}l@{}}
    $-v^+_1 \leq -v_2 \wedge -v^+_1 \leq -v_3$ \\
    $-v^+_1 \geq -v_2 + 2M(1 - b^+_1) \wedge -v^+_1 \geq -v_3 + 2M b^+_1$
    \end{tabular} \\ \hline
    \label{eq: MILP_table_v_1_min_with_1} (C4) &
    \begin{tabular}[c]{@{}l@{}}
    $v'_1 \leq \WR^{-1}\left(\frac{v^+_1 + v^-_1}{2}\right) \wedge v'_1 \leq 1$ \\
    $v'_1 \geq \WR^{-1}\left(\frac{v^+_1 + v^-_1}{2}\right) - 2M(1 - b'_1) \wedge v'_1 \geq 1 - 2M b'_1$
    \end{tabular} \\ \hline
    \label{eq: MILP_table_v_1_max_with_0} (C5) &
    \begin{tabular}[c]{@{}l@{}}
    $-v_1 \leq -v'_1 \wedge -v_1 \leq 0$ \\
    $-v_1 \geq -v'_1 + 2M(1 - b_1) \wedge -v_1 \geq 2M b_1$
    \end{tabular} \\ \hline
    \label{eq: MILP table v0} (C6) &
    \begin{tabular}[c]{@{}l@{}}
    $v^-_0 \leq v_0 \wedge v^-_0 \leq v_1$ \\
    $v^-_0 \geq v_0 - 2M(1 - b^-_0) \wedge v^-_0 \geq v_1 - 2M b^-_0$ \\
    $-v^+_0 \leq -v_0 \wedge -v^+_0 \leq -v_1$ \\
    $-v^+_0 \geq -v_0 + 2M(1 - b^+_0) \wedge -v^+_0 \geq -v_1 + 2M b^+_0$ \\
    $v'_0 \leq \WR^{-1}\left(\frac{v^+_0 + v^-_0}{2}\right) \wedge v'_0 \leq 1$ \\
    $v'_0 \geq \WR^{-1}\left(\frac{v^+_0 + v^-_0}{2}\right) - 2M(1 - b'_0) \wedge v'_0 \geq 1 - 2M b'_0$ \\
    $-v_0 \leq -v'_0 \wedge -v_0 \leq 0$ \\
    $-v_0 \geq -v'_0 + 2M(1 - b_0) \wedge -v_0 \geq 2M b_0$
    \end{tabular} \\ \hline
    (C7) & $b_1^-,b_1^+,b_1',b_1,b_0^-,b_0^+,b_0',b_0\in \{0,1\}$\\
    \hline
    \end{tabular}
    \caption{The MILP for \cref{xmp:milp}}
    \label{tab:milp}
\end{table}

     (C1) expresses the average property requirement for the sinks.
     For the non-sink $v_1$, the requirement of the average property
    involves $v^-_1$, which can attain the value of either $v_2$ or $v_3$, and therefore introduces a new variable\footnote{Observe that for games with out-degree at most 2, we can exploit the symmetry between $v^-_i$ and $v^+_i$, in that we do not need to encode which is the minimal and which is the maximal.} and the constraints in 
    (C2).
    The binary variable $b^-_1$ gets the value $1$ if $v_2=\min(v_2,v_3$), and $0$ otherwise. 
    The constraint (C3) similarly set $v^+_1=-\min(-v_2,-v_3)$.
%
    Next, 
    (C4)
    serves to express $v_1=\min\left(\WR^{-1}\left(\frac{v^+_1+v^-_1}{2}\right),1\right)$, and 
    (C5) deals with the maximum with $0$.
    (C6) 
    describes the analogue of 
    (C2)--(C5) for $v_0$.
    Finally, (C7) puts the integer constraints on the various $b_i$ variables.
\end{example}
    
We remark that in particular, we can solve a decision version of finding the threshold (i.e., comparing it to a given bound) in $\mathsf{NP}$.
Additionally, it is not hard to construct games for which the set of functions that satisfy the average property is not convex. This suggests (but does not prove) that formulating the problem in Linear Programming, or indeed finding a polynomial time solution, is unlikely.

\section{Initial budget of exactly the threshold}
\label{sec: what happens at threshold}
Recall from \cref{sec: definitions} that the definition of a threshold function only considers the behavior strictly above or below the threshold. In this section, we study the behavior exactly at the threshold. We present two results. First, surprisingly, we show that when starting with exactly the threshold the game can be \emph{undetermined} (i.e., no player has a winning strategy). Next, we show how to decide in polynomial time whether the threshold in each vertex is 1-strong, 2-strong, or weak. 


\begin{example}
\label{xmp:undetermined}
Consider the game in \cref{fig: example undetermined} with $\lambda=\frac{1}{8}$. Here $v_1$ stands for an initial vertex of some game with a 1-strong threshold of $\frac{7}{18}$ (e.g., the game depicted in \cref{fig: example 7/18}). The only solution to the average property then gives $\Th(v_0)=\frac{5}{18}$. We claim that starting with $\xinit=\Th(v_0)$, the game is undetermined. Indeed, fix a strategy for \PO, we show that \PT can counter it and win. We remind that for a vertex $v$ we have $\Th_\mathrm{diff}(v)=\frac{\Th(v^+)-\Th(v^-)}{2}$, and that the reachability objective allows us to restrict the discussion to memoryless strategies. In $v_0$:
\begin{itemize}[leftmargin=*]
    \item If \PO bids at least $\frac{1}{18}=\Th_\mathrm{diff}(v_0)$, \PT bids 0. \PO wins the bidding with a resulting budget of at most $\Th(v_0^-)=\Th(v_0)=\frac{5}{18}$. If she moves to $v_1$, she has strictly less than the threshold $\Th(v_1)=\frac{7}{18}$ and she loses. If she stays in $v_0$ indefinitely, she also loses.
    \item If \PO bids $\frac{1}{18}-\epsilon$ for $\epsilon>0$, \PT bids $\frac{1}{18}-\frac{\epsilon}{2}$. He wins the bidding and moves to $v_1$, where \PO's budget is $\Th(v_1)-\frac{\epsilon}{2}$, so \PO loses again.
\end{itemize}
Conversely, fix a strategy for \PT, and we show that \PO can counter it and win. In $v_0$:
\begin{itemize}[leftmargin=*]
    \item If \PT bids at least $\frac{1}{18}=\Th_\mathrm{diff}(v_0)$, \PO bids 0. \PT then wins the bidding, and \PO has at least $\Th(v_0^+)=\Th(v_1)=\frac{7}{18}$. If \PT stays in $v_0$ then \PO has strictly more than the threshold and she wins. If \PT moves to $v_1$, \PO still wins since $\Th(v_1)=\frac{7}{18}$ is 1-strong.
    \item If \PT bids $\frac{1}{18}-\epsilon$ for $\epsilon>0$, \PO bids $\frac{1}{18}-\frac{\epsilon}{2}$. She wins the bidding and stays in $v_0$, but increases her budget to $\Th(v_0)+\frac{\epsilon}{2}$, allowing her to win.
\end{itemize}
We conclude that neither player has a winning strategy from $v_0$ with $\xinit=\frac{5}{18}$.
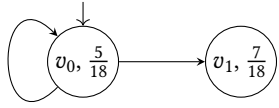
\begin{figure}
\centering  
\begin{tikzpicture}[auto,node distance=2.8cm,scale=0.7]
    \tikzset{every state/.style={minimum size=20pt, inner sep=1},};
    \node (q0) [initial above, state, initial text = {}] at (0,0) {$v_0,\frac{5}{18}$};
    \node (q1) [state] at (3,0) {$v_1,\frac{7}{18}$};
    \path [-stealth]
    (q0) edge [loop, distance=50,out=225,in=135] node {} (q0)
    (q0) edge (q1);
\end{tikzpicture}
\vspace{-0.5cm}
\caption{A game undetermined at the threshold for $\lambda=\frac{1}{8}$}
\label{fig: example undetermined}
\Description{A graph consisting of an initial vertex $v_0$ with a self loop and an edge to $v_1$}
\end{figure}

\begin{figure}
\centering  
\begin{tikzpicture}[auto,node distance=2.8cm,scale=1]
    \tikzset{every state/.style={minimum size=15pt, inner sep=1},};
    \node (q0) [initial above, state, initial text = {}] at (0,0) {$\frac{7}{18}$};
    \node (q1) [state] at (2,0) {$\frac{5}{6}$};
    \node (q2) [state] at (4,0) {$\frac{1}{2}$};
    \node (q3) [state, accepting] at (6,0) {$0$};
    \node (q4) [state] at (3,-0.5) {$1$};
    \path [-stealth]
    (q0) edge (q1)
    (q1) edge (q2)
    (q2) edge (q3) [bend right]
    (q1) edge (q4) [bend left]
    (q2) edge (q4)
    (q0) edge [bend left=15] (q3);
\end{tikzpicture}
\caption{A game with threshold $\frac{7}{18}$ for $\lambda=\frac{1}{8}$}
\label{fig: example 7/18}
\Description{A game played on a DAG whose initial vertex has threshold $\frac{7}{18}$}
\end{figure}
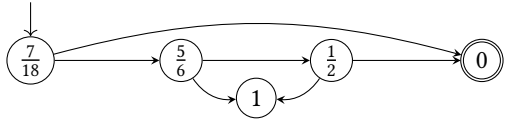
\end{example}

We now turn to show that it is decidable in polynomial time whether the threshold at a vertex is 1-strong / 2-strong / weak. We start with some intuition based on \cref{xmp:undetermined}. There, the reason \PO does not have a winning strategy is that upon bidding $\Th_\mathrm{diff}(v_0)$ and winning, she must go to $v_0^-=v_0$, so no progress is made. Conversely, she must bid $\Th_\mathrm{diff}(v)$ in order not to lose immediately. 
The observant reader may see that if \PO can follow a path consisting only of transitions of the form $(v,v^-)$ until reaching $T$, then she can guarantee winning from the threshold. Indeed, we show that if there is such a path, then the threshold is 1-strong. The latter can be easily checked using graph reachability.

Next, if this condition fails, we need to distinguish between a 2-strong and a weak threshold. We show that this distinction can be made by reduction to a 2-player \emph{turn based} reachability game, which are solvable in polynomial time.
In the following we consider $\lambda>0$. The case of $\lambda=0$ is easier (see \cref{prop:lambda zero threshold is 1 strong iff T reachable}).

\begin{theorem}
    Let $\cG$ be a game with $\lambda>0$. Given the threshold function, it is possible to decide in polynomial time whether the threshold is 1-strong, 2-strong, or weak, for each vertex.
\end{theorem}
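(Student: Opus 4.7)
The plan is to characterize the 1-strong and 2-strong vertices ($W_1$ and $W_2$ respectively) via local recursions involving min- and max-threshold neighbors, and to compute each set by a polynomial-time fixed-point iteration.

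The first step is to identify optimal bids at a threshold configuration $(v,\Th(v))$. When $\Th_\pre(v)\in[0,1]$ the average property is tight, i.e., $\WR(\Th(v))=f_\mathrm{avg}(v)$, and I would show that both players' unique non-dominated bid is $\Th_\mathrm{diff}(v)$. A winning bid strictly larger than $\Th_\mathrm{diff}(v)$ by \PO drops her budget strictly below $\Th(v^-)\le\Th(u)$ for every neighbor $u$ she can reach, so she loses by the threshold property; dually, a losing overbid by \PT pushes \PO strictly above $\Th(v^+)$, which is above the threshold at any neighbor \PT picks. Symmetric arguments rule out underbids. Consequently, the only transitions from $(v,\Th(v))$ to analyze are $b_1=b_2=\Th_\mathrm{diff}(v)$: either \PO wins the tie and moves to a min-threshold neighbor of her choice (landing at the exact threshold there), or \PO underbids, \PT wins and moves to a max-threshold neighbor of her choice, in which case \PO's budget becomes exactly $\Th(v^+)$ --- again at the threshold.

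The second step is to derive the recursions. Writing $N^-(v):=\{u\in\neig(v):\Th(u)=\Th(v^-)\}$ and $N^+(v)$ analogously, I would prove: $v\in W_1$ iff $v\in T$ or some $u\in N^-(v)$ lies in $W_1$; and $v\in W_2$ iff $v$ is a non-target sink, or every $u\in N^-(v)$ lies in $W_2$ \emph{and} some $u\in N^+(v)$ lies in $W_2$. The universal quantifier in the $W_2$ recursion reflects \PO's control over the min-neighbor on tied bids: a min-neighbor in the weak set is as fatal for \PT as one in $W_1$, since she must \emph{guarantee} winning in the continuation, while a weak threshold leaves the continuation undetermined. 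Boundary cases with $\Th_\pre(v)\notin[0,1]$ --- i.e., $\Th(v)\in\{0,1\}$ by capping --- give one player a strict budget surplus after WR and are handled directly by the threshold property rather than by the recursion.

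The final step is computation. $W_1$ is a least fixed point, equivalent to backward reachability of $T$ in the auxiliary graph with edge set $\{(v,u):v\in V,\ u\in N^-(v)\}$, solvable in linear time. $W_2$ is a greatest fixed point of a safety-style condition, computable by iterated pruning in at most $|V|$ rounds; equivalently, $W_2$ is \PT's winning region in the turn-based game in which \PO picks a min-threshold neighbor and \PT picks a max-threshold neighbor, solvable in polynomial time by standard attractor computation. The weak vertices are $V\setminus(W_1\cup W_2)$. The main subtlety, and the step requiring the most care, is verifying that continuous $\varepsilon$-perturbations of the greedy bid open no additional winning possibilities despite concurrent bidding --- that is, that the space of winning strategies at threshold is captured by the discrete ``$\Th_\mathrm{diff}$-and-greedy'' template even though bids range over a continuum.
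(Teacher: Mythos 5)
Your final characterization agrees with the paper's: vertices with $\Th_\pre(v)\notin[0,1]$ are classified directly, 1-strongness elsewhere is backward reachability along min-threshold edges, and 2-strong versus weak is decided by the turn-based game in which \PO either moves to a min-threshold neighbor or passes to \PT, who moves to a max-threshold neighbor; the resulting algorithm is polynomial, as in the paper. However, the justification of the reduction to this discrete template has a genuine gap, and it is precisely the step you defer at the end. Your first-step claim that $\Th_\mathrm{diff}(v)$ is ``both players' unique non-dominated bid'' and that ``symmetric arguments rule out underbids'' is false as a dominance statement: against an opponent bid below $\Th_\mathrm{diff}(v)-\epsilon$, bidding $\Th_\mathrm{diff}(v)-\epsilon$ is strictly \emph{better} than bidding $\Th_\mathrm{diff}(v)$ (the winner retains $\epsilon$ more budget), so underbids are not dominated; they are only punishable by an opponent whose strategy is chosen with knowledge of the player's strategy, i.e., the argument depends on the quantifier order. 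Indeed, if each player had a single optimal bid at the threshold and ties go to \PO, the game could not be undetermined there, contradicting \cref{xmp:undetermined}. So one cannot declare up front that only the transitions with $b_1=b_2=\Th_\mathrm{diff}(v)$ need analysis; the paper instead fixes one player's strategy and builds an explicit $\epsilon/2$-overcut counter-strategy, separately for the two directions: to refute 1-strongness, \PT counters every \PO strategy while maintaining the invariant that \PO's budget stays at most $\Th(u)$ inside $V_\mathrm{mid}$ and is strictly below the threshold upon entering the 1-strong region (this uses the assumption that no min-threshold path reaches that region); to justify the turn-based game, \PO punishes any \PT bid other than exactly $\Th_\mathrm{diff}(u)$ followed by a move to a max-threshold neighbor. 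Your proposal names this obligation (``$\varepsilon$-perturbations open no additional winning possibilities'') but does not discharge it, and the one-shot dominance argument offered in its place does not substitute for it.

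A secondary, fixable imprecision: the recursions must be seeded with the boundary classes, not only with $T$ and the non-target sinks. The paper's 1-strong condition for $V_\mathrm{mid}$ is reachability along min-threshold edges of the whole set $V_1=\{v\mid\Th_\pre(v)<0\}$, which may contain non-target vertices of threshold $0$; dually, non-sink vertices with $\Th_\pre(v)>1$ (threshold $1$, 2-strong) must be absorbed as \PT-winning sinks of the turn-based game regardless of what the recursion would say about them. Your ``handled directly by the threshold property'' remark suggests you intend this, but as literally stated ($W_1$ seeded only by $T$, $W_2$ only by non-target sinks) the fixed points are not obviously the same sets as the paper's.
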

\begin{proof}
For simplicity, we add a self loop in each sink, so that $u^-=u^+=u$ are defined for a sink $u$. Note that this does not affect the thresholds.  
We partition the vertices $V$ as follows.
\begin{align*}
    &V_1 =  \{v\in V \mid \Th_\pre(v)<0 \}, \quad 
    V_2 = \{v\in V \mid \Th_\pre(v)>1 \} \\
    &\text{ and }V_\mathrm{mid} =  V\setminus\left(V_1\cup V_2\right).
\end{align*}
Note that due to the self loops on sinks, we have that $T\subseteq V_1$. For $v\in V_1$ we have $\Th(v)=0$, and starting in $(v,\Th(v))$, \PO's budget during the first bidding is
\[
    \WR(0) > \WR(\Th_\pre(v))=\WR(\WRinv(\Th_\mathrm{avg}(v)))=\Th_\mathrm{avg}(v)
\]
so she has strictly more than the threshold in the next vertex. Thus, she can proceed with the standard winning strategy. Therefore, $\Th(v)=0$ is 1-strong. 
A similar argument shows that $\Th(v)=1$ is 2-strong for $v\in V_2$. It is left to deal with $V_\mathrm{mid}$.

We obtain a graph $G_\mathrm{good}$ from $G$ by removing from $G$ all the vertices in $V_2$, as well as every edge $(u,w)$ such that $\Th(w)>\Th(u^-)$. 
Intuitively, all the edges in $G_\mathrm{good}$ are of the form $(u,u^-)$ for every $u^-$ that minimizes the threshold among the neighbors of $u$. To put in the context of \cref{def: average property}, we allow a choice of any of the threshold-minimizing neighbors, instead of fixing one in advance.

For $v\in V_\mathrm{mid}$, if $V_1$ is reachable from $v$ in $G_\mathrm{good}$, then $\Th(v)$ is 1-strong. Indeed, \PO wins from $(v,\Th(v))$ by bidding $\Th_\mathrm{diff}(u)$ at each $u$ and moving along a path to $V_1$, maintaining a budget of exactly the threshold (or losing the bidding and having budget strictly greater than the threshold).

Next, assume $V_1$ is not reachable from $v$ in $G_\mathrm{good}$. We claim that $\Th(v)$ is not 1-strong, and deciding whether it is 2-strong or weak is reducible to a turn-based reachability game.

Starting in $(v,\Th(v))$, fix a strategy $\sigma_1$ for \PO, we show \PT has a strategy $\sigma_2$ such that in the resulting play:
\begin{itemize}[leftmargin=*]
    \item As long as the play stays in $V_\mathrm{mid}$, the configuration $(u,x)$ satisfies $x\leq \Th(u)$.
    \item If the play reaches $V_2$ then $x\leq \Th(u)$.
    \item If the play reaches $V_1$ then $x< \Th(u)$.
\end{itemize}
Using this, since $T\subseteq V_1$, \PO cannot win without leaving $V_\mathrm{mid}$, and since $\Th(u)$ is 2-strong for $u\in V_2$, it follows that \PT wins the play. 

It remains to prove the property above. For $u\in V_\mathrm{mid}$, we have $\Th(u)=\Th_\pre(u)$, and so if the configuration $(u,x)$ satisfies $x\leq \Th(u)$, \PO's budget after WR is $\WR(x)\leq\WR(\Th(u))=\WR(\Th_\pre(u))=\Th_\mathrm{avg}(u)$. We describe \PT's strategy:
\begin{itemize}[leftmargin=*]
    \item If \PO bids at least $\Th_\mathrm{diff}(u)$, then \PT bids 0. \PO then wins the bidding, her new budget is at most $\Th(u^-)$, and the first two conditions follow. Since $V_1$ is not reachable in $G_\mathrm{good}$, we have that any $w\in \neig(u)\cap V_1$ has $\Th(w)>\Th(u^-)$, and so the third condition follows.
    \item If \PO bids $\Th_\mathrm{diff}(u)-\epsilon$ for some $\epsilon>0$, then \PT bids $\Th_\mathrm{diff}(u)-\frac{\epsilon}{2}$. He wins the bidding and moves to $u^+$, at which point \PO's budget is $\Th(u^+)-\frac{\epsilon}{2}<\Th(u^+)$, as needed.
\end{itemize}
It is left to decide whether $\Th(v)$ is 2-strong or weak for $v\in V_\mathrm{mid}$ such that $T$ is not reachable in $G_\mathrm{good}$ from $v$. 
For $u\in V_\mathrm{mid}$, let 
\begin{align*}
    N^+(u)=\{w\in \neig(u) \mid \Th(w)=\Th(u^+)\} \\
    N^-(u)=\{w\in \neig(u) \mid \Th(w)=\Th(u^-)\}
\end{align*}

Fix a strategy for \PT. As long as the play stays in $V_\mathrm{mid}$ and $x=\Th(u)$, if \PT bids $\Th_\mathrm{diff}(u)-\epsilon$ for $\epsilon>0$, then \PO can win by bidding $\Th_\mathrm{diff}(u)-\frac{\epsilon}{2}$ and moving to $u^-$. If \PT bids $\Th_\mathrm{diff}(u)+\epsilon$ then \PO wins by bidding $0$. Therefore, if \PT has a winning strategy, it must prescribe a bid of $\Th_\mathrm{diff}(u)$, and move to some $w\in N^+(u)$, as long as $u\in V_\mathrm{mid}$. 
Restricting \PT to such strategies results in a turn-based game, where at each step \PO can either bid $\Th_\mathrm{diff}(u)$ and move to $N^-(u)$, or bid less and force \PT to move to $N^+(u)$, maintaining the invariant $x=\Th(u)$. Here $V_1$ becomes \PO's target, and $V_2$ becomes a sink. If \PT wins that turn-based game then $\Th(v)$ is 2-strong, and otherwise it is weak.

Since deciding the winner in a turn-based reachability game can be done in polynomial time, we conclude the proof.
\end{proof}



Finally, we show that for $\lambda=0$ things are simpler: if $T$ is reachable then the threshold is 1-strong, and otherwise it is trivially 2-strong. This reduces the decision problem to graph reachability.

\begin{proposition}
\label{prop:lambda zero threshold is 1 strong iff T reachable}
    Let $\cG$ be a game with $\lambda=0$. If $T$ is reachable from $v$ then $\Th(v)$ is 1-strong.
\end{proposition}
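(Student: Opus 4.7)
The plan is to combine the fact (proved in~\cite{avni2021bidding}) that for $\lambda=0$, $T$ is reachable from $v$ if and only if $\Th(v)<1$, with an explicit memoryless winning strategy for \PO from $(v,\Th(v))$. Under the hypothesis, we have $\Th(v)<1$. I would define \PO's strategy at a configuration $(u,x)$ as follows: if $x>\Th(u)$ she plays any strategy that wins in this strict regime (which exists by the threshold property); otherwise $x=\Th(u)$, and if $u\in T$ she has already won, while if $u\notin T$ she bids $\Th_\mathrm{diff}(u)=(\Th(u^+)-\Th(u^-))/2$ and, upon winning the bid, moves to some arg-min neighbor $u^-$~--- breaking ties within the arg-min set by preferring a neighbor with the smallest graph distance to $T$.

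Next I would verify, using the strategy description in \cref{lem: strategies for average property}, that along any play against this strategy the invariant ``current budget is at least the threshold of the current vertex'' is maintained, and that whenever \PO loses a bid her budget at the new vertex is strictly above that vertex's threshold (so she may switch to the strict-regime strategy and win). It therefore suffices to analyse the case where \PO wins every bid, tracing a path $v=v_0,v_1,\ldots$ with $v_{i+1}=v_i^-$ and $\Th(v_{i+1})\leq\Th(v_i)<1$. This path stays in the set $S=\{u : T\text{ reachable from }u\}$, because the cited equivalence implies that any $w$ with $\Th(w)<1$ lies in $S$.

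The main obstacle is to show that this path reaches $T$. Here I would exploit that for $\lambda=0$ the average property suffers no clipping (since $\Th_\mathrm{avg}\in[0,1]$), giving the exact identity $\Th(u)=(\Th(u^+)+\Th(u^-))/2$ at every non-sink. Since $\Th(v_i)$ is non-increasing within a finite set of values, it stabilises at some $t^\ast<1$, and along the tail $\Th(v_i)=\Th(v_i^-)=t^\ast$ forces $\Th(v_i^+)=t^\ast$, so \emph{every} neighbor of $v_i$ has threshold $t^\ast$. Iterating, the entire $G$-closure of the tail lies in $U_{t^\ast}=\{u : \Th(u)=t^\ast\}$; but $T$ is reachable in $G$ from every $v_i$ and every such path must stay in $U_{t^\ast}$, so $T\cap U_{t^\ast}\neq\emptyset$ and hence $t^\ast=0$. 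Once the play enters $U_0$, every neighbor is an arg-min, so \PO's distance-based tie-breaking lets her freely pick a neighbor one step closer to $T$ in the closed subgraph $U_0$, reaching $T$ in at most $|V|$ further steps. I expect the delicate part to be precisely this closure argument~--- verifying that the eventual SCC really sits inside $U_{t^\ast}$ and that reachability of $T$ from inside a $\Gamma$-closed set forces $t^\ast=0$.
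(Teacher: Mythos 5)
Your overall route is genuinely different from the paper's (the paper shows that $T$ remains reachable from $v$ in the subgraph $G^-$ that keeps only edges into threshold-minimizing neighbours, via a minimal-threshold counterexample argument, and then lets \PO follow a fixed path in $G^-$), but as written your argument has one step that does not hold up: the ``iterating'' claim that the entire $G$-closure of the tail lies in $U_{t^\ast}$. Your average-property argument shows that all neighbours of a vertex $v_i$ \emph{on the tail} have threshold $t^\ast$, because there you additionally know the arg-min value: $\Th(v_i^-)=\Th(v_{i+1})=t^\ast$. For a neighbour $w$ off the path you only know $\Th(w)=t^\ast$, and the unclipped identity $\Th(w)=\frac{\Th(w^+)+\Th(w^-)}{2}$ is perfectly consistent with $\Th(w^-)<t^\ast<\Th(w^+)$; so the propagation stops after one step, the closure claim is unjustified, and with it the deductions that every $G$-path from $v_i$ to $T$ stays in $U_{t^\ast}$, that $T\cap U_{t^\ast}\neq\emptyset$, and that $t^\ast=0$.

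Fortunately you do not need $t^\ast=0$ at all: the ingredients you already assembled close the argument directly. Once the (assumed never-terminating) sequence $\Th(v_i)$ has stabilized at $t^\ast<1$, your own computation shows that \emph{every} neighbour of the current vertex is an arg-min neighbour; hence your distance-based tie-breaking picks a neighbour whose $G$-distance to $T$ is strictly smaller than that of $v_i$ (this distance is finite because $\Th(v_i)=t^\ast<1$ implies $T$ is reachable from $v_i$). The distance to $T$ therefore strictly decreases along the tail, contradicting the assumption that the play never reaches $T$. Combined with your (correct) invariant analysis — exactly the threshold at the new vertex upon winning a bid, strictly above it upon losing one, in which case \PO switches to a strict-regime winning strategy — this proves the proposition. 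So the approach is salvageable essentially as you set it up; delete the closure/$t^\ast=0$ detour and apply the distance-decrease argument at level $t^\ast$ itself.
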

\begin{proof}
Since $\lambda=0$, for every $v\in V$ we have $\Th(v^-)\leq\Th(v)$, and equality holds if and only if $\Th(v^-)=\Th(v^+)$.
We obtain from $G$ a graph $G^-$ by removing every edge $(u,w)$ such that $\Th(w)>\Th(u^-)$. 
Intuitively, all the edges in $G^-$ are of the form $(u,u^-)$ for every $u^-$ that minimizes the threshold. To put in context, we allow a choice of any of the threshold-minimizing neighbors, instead of fixing one in advance.

Let $v\in V$ such that $T$ is reachable from $v$ in $G^-$ via some path $P$. 
\PO can now win from $(v,\Th(v))$: At each vertex $u$ she bids $\Th_\mathrm{diff}(u)$. Upon winning the bidding she moves along $P$, thus reaching $T$ if she wins every bidding. Upon losing any bidding, \PO reaches a budget higher than the threshold, and can follow a suitable strategy.

It is left to show that if $T$ is reachable from $v$ in $G$, then it is also reachable in $G^-$. Assume by way of contradiction there exist vertices for which this does not hold, and let $v_0$ be such a vertex with a minimal threshold. 
That is, $T$ is reachable from $v_0$ in $G$ but not in $G^-$, and $v_0$ has minimal threshold among such vertices.

Fix a path $P=v_0,\ldots,v_k$ in $G$. If $\Th(v^-_i)=\Th(v^+_i)$ for all $i<k$ then $P$ is a valid path in $G^-$ as well. 
Otherwise, let $i$ be the minimal index such that $\Th(v^-_i)<\Th(v^+_i)$. Then $\Th(v_j)=\Th(v_0)$ for all $j\leq i$ and the prefix $P'=v_0,\ldots,v_i$ of $P$ is valid in $G^-$. 
Since $\Th(v^-_i)<\Th(v_i)\le 1$, it follows that $T$ is reachable from $v^-_i$ in $G$ (as otherwise the threshold at $v^{-}_i$ would be $1$).
Additionally, $\Th(v^-_i)<\Th(v_i)=\Th(v_0)$, and by the minimality of $\Th(v_0)$ it follows that $T$ is reachable from $v^-_i$ in $G^-$ along a path $P''$. The concatenation $P'P''$ is therefore a path from $v_0$ to $T$ in $G^-$.
\end{proof}

\section{Discussion and Future Research}
\label{sec:discussion}
Robin Hood bidding games incorporate a regulating entity into bidding games, allowing the simulation of realistic settings that cannot be captured with standard bidding games. 
The introduction of wealth redistribution comes at a technical cost: analyzing the behavior of the game becomes much more involved (c.f., \cref{sec: example}). Nonetheless, we are able to show that the model retains the nice property of having a threshold function, albeit the game may become undetermined exactly at the threshold.

Apart from establishing the theoretical and algorithmic foundations of this setting, our results shed light on various properties of the optimal strategies of the players. In particular, we show that when starting above the threshold, \PO intuitively plays on an unwinding of the game to a DAG in order to reach $T$. However, when playing exactly from the threshold, \PO needs a path to $T$ along which she can afford to win all the biddings. 

A natural future direction is to extend our framework to infinite-duration games, e.g., B\"uchi and parity games. For standard bidding games, winning in infinite-duration games reduces to an analysis of strongly connected components~\cite{avni2019infiniteDuration}. In the Robin Hood case, however, this no longer applies, suggesting that showing the existence of a threshold function is nontrivial, if there even exists one.

One view of WR is as a mechanism for changing the budgets of players outside the bidding phase. A different mechanism for achieving this is, introduced in~\cite{avni2024bidding}, designates special vertices where agents can \emph{charge} their budget. From an economical perspective, this can be seen as vertices where a player performs some ``work'' and receives a salary. 
Thus, combining the models would allow us to specify a richer economic dynamics. It would be interesting to examine whether this model retains nice algorithmic properties.

A different research direction concerns viewing wealth redistribution as a form of \emph{discounting}~\cite{broome1994discounting}: in discounting, the value of future rewards decreases exponentially with time, according to some discount factor $\lambda$. Wealth redistribution can then be viewed as a discounting factor on the difference of the budgets of the agents. It may be of interest to consider other models of discounting in bidding games, e.g., a reward model for the agents where future-budgets are worth less than current budgets.

\balance



\begin{acks}
This research was supported by the ISRAEL SCIENCE FOUNDATION (grants No. 989/22 and No. 1679/21).
\end{acks}


\bibliographystyle{ACM-Reference-Format} 
\bibliography{sample}


\end{document}